\newtheorem{theorem}{Theorem}
\newtheorem{defi}{Definition}
\newtheorem{example}{Example}
\newtheorem{corollary}{Corollary}
\newtheorem{lemma}{Lemma}
\theoremstyle{remark}
\newtheorem{remark}{Remark}
\title{Experimentation on Endogenous Graphs}
\author {
    Wenshuo Wang,
    Edvard Bakhitov,
    Dominic Coey
}
\affil {
    Central Applied Science, Meta\\
    wenshuowang@meta.com, edbakhitov@meta.com, coey@meta.com
}
\date{}
\begin{document}

\maketitle

\begin{abstract}
   We study experimentation under endogenous network interference. Interference patterns are mediated by an endogenous graph, where edges can be formed or eliminated as a result of treatment. We show that conventional estimators are biased in these circumstances, and present a class of unbiased, consistent and asymptotically normal estimators of total treatment effects in the presence of such interference. { We show via simulation that our estimator outperforms existing estimators in the literature.} Our results apply both to bipartite experimentation, in which the units of analysis and measurement differ, and the standard network experimentation case, in which they are the same.
\end{abstract}

\section{Introduction}
\subsection{Background}
In many settings, the fundamental no-interference assumption common in experimental analysis \citep{cox1958planning,rubin1980randomization} is implausible, and there are strong grounds for believing that the treatment assigned to one unit may affect outcomes of other units. This is especially true when outcomes are in part determined by peoples' interactions with each other, as is often the case in public health, education, voting, social media, and other social domains \citep{calvo2009peer,dimaggio2012network, halloran2016dependent,taylor2018randomized}. A large literature has arisen developing the theory of estimation and inference for various causal quantities of interest under these conditions.

Typically, theoretical and empirical analysis proceeds assuming that the interference structure is known—that is, the researcher has prior knowledge of the set of units which might be affected by the treatment of any given unit. This is usually expressed in terms of a known ``exposure mapping'' or ``effective treatment'' function, which determines the level of treatment exposure a unit receives as a function of the treatment assignment vector to all units, or in terms of a ``partial interference'' or ``neighborhood interference'' assumption, according to which interference takes place only within groups, where the groups are known ex-ante \citep{hudgens2008toward,tchetgen2012causal,manski2013identification,liu2014large,aronow2017estimating,forastiere2022estimating}. This is a natural starting point that greatly facilitates the development of treatment effect estimators and the characterization of their behavior.

In practice, the interference structure is rarely known \citep{egami2021spillover,savje2024causal}. Analysts often rely on past user interactions to construct proxy interference graphs \citep{aral2009distinguishing,bakshy2012social,bond201261,coppock2016treatments,harshaw2021design,karrer2021network}. These graphs—such as friendship networks in social media or engagement-based links in marketplaces—are typically crude approximations of the true, unobserved interference graph.\footnote{For example, social network users frequently interact with “unconnected nodes” by being recommended to join new groups or follow new users.} Complicating matters further, they may themselves be causally affected by the treatment of interest. The analyst is presented with an uncomfortable choice—either use a graph constructed only on the basis of pre-treatment data, which may omit relevant information about how treatments affect outcomes, or incorporate post-treatment data too, which can lead to complex biases in treatment effect estimators.

\subsection{Our Contributions}
Motivated by this gap between theory and practice, this paper takes a different approach, explicitly allowing for the observed graph to be determined by treatment assignment. We work in a general bipartite setting, in which units of analysis and randomization are allowed to differ. We propose the notion of an \emph{endogenous} bipartite graph which may be treatment-dependent, and which contains information on interference from randomization to analysis units. Generalizing the outcome model of \citet{harshaw2021design} to this setting, we give an unbiased, consistent and asymptotically normal estimator of the total treatment effect, as well as a consistent testing procedure for the sharp null of no treatment effect. Our identification and estimation approach relies on a treatment-invariant ``anchor'' subgraph, from which instrumental variable estimates of unit-level treatment responses can be obtained.


Our results apply directly to bipartite experiments, which are of growing interest in their own right. Units of analysis and randomization may differ because it is not practical or impossible to measure outcomes of interest at the level at which treatments are being assigned \citep{zigler2021bipartite}. Alternatively, there may be no conceptual difficulty in defining outcomes at the randomization unit-level, but accounting for the bipartite structure is a natural way to allow for cross-unit interference. Causal inference problems with this structure are common in social networks, recommender systems, digital advertising, and multi-sided platforms broadly (\citet{chawla2016b,gilotte2018offline,pouget2019variance,harshaw2021design,nandy2021b,johari2022experimental,bajari2023experimental,shi2024scalable}).


Our framework extends naturally to the standard unipartite setting, where the units of analysis and randomization are the same. 
There, we combine unbiased inverse-propensity weighted estimators of the direct effect, while estimating the indirect effect using the previously developed logic for bipartite estimators. Unbiasedness, consistency and asymptotic normality follow as corollaries to our results for bipartite graphs.

\subsection{Related Work}

Two subareas of network science are especially related to this paper: models of network formation and estimation of peer effects (see \citet{an2011models,chandrasekhar2016econometrics,bramoulle2020peer} for overviews). Particularly relevant is an emerging body of work that questions standard assumptions about interference structure and allows for misspecification of the interference graph \citep{aronow2017estimating,eckles2017design,wang2020design,leung2022causal,savje2024causal}. 

Related work also tackles causal inference on unknown graphs \citep{basse2018limitations,chin2018central,egami2021spillover,savje2021average,cortez2022staggered,yu2022estimating,shirani2023causal,halloran2016dependent}. Like these papers, we are concerned with the implausibility of correctly specifying the interference structure. However, we focus on the case of \textit{endogenous} misspecification: treatment effects propagate along observed edges, but those edges may themselves depend on treatment assignment realizations. Notable recent exceptions that address this setting include \citet{comola2021treatment,gao2024endogenous,ryu2024}. Unlike \citet{comola2021treatment}, we estimate the effect of treating everyone vs no one, the policy-relevant estimand. \citet{ryu2024} takes a non-instrumental variable-based identification approach in the unipartite setting, using conditional means for outcome and edge formation with parameters assumed to be shared across units. Most closely related is \citet{gao2024endogenous}, which also develops an instrumental variable strategy based on a pre-treatment graph, proceeding from quite different modeling assumptions (a unipartite graph with undirected, unweighted edges, and an outcome model with parameters shared across units).

To the best of our knowledge, in addition to providing novel insights into identification and estimation of treatment effects on endogenous unipartite graphs, these are also the first available results on treatment effect estimation for endogenous bipartite graphs. Several important questions remain open, including consistent variance estimation for the total treatment effect estimator.

\subsection{Outline of the Paper} 
The rest of the paper is organized as follows. Section~\ref{sec:setup} introduces the basic setup for endogenous bipartite interference graphs and discusses the edge endogeneity bias. Section~\ref{sec:edge} proposes several models of endogenous edge formation. Section~\ref{sec:tte_estimators} includes our main estimator, and for which we demonstrate unbiasedness, consistency and asymptotic normality under certain assumptions. We also propose a statistical test for the sharp null with asymptotic power equal to $1$. Our theory is supported by simulations in Section~\ref{sec:simulation}, where we examine the biases of our estimator and two exposure reweighted linear estimators \citep{harshaw2021design}, and demonstrate the confidence interval coverage properties of our estimator based on a variance proxy.
Section~\ref{sec:unipartite} extends the results to unipartite graphs, and Section~\ref{sec:conclusion} concludes. All proofs and additional details are included in the Supplementary material.

\section{Setup}
\label{sec:setup}
There are $n_a$ analysis units $a \in \mathcal{A}$ and $n_r$ randomization units $r \in \mathcal{R}$. The treatment assignment random vector is denoted by $\mathbf{T} = (T_1,T_2,\ldots,T_{n_r})$. For each analysis unit $a$, the potential outcome function $Y_a:\{0,1\}^{n_r}  \to \mathbb{R}$ maps treatment assignment vectors to outcomes. Throughout, the object of interest is the total treatment effect (TTE), $\mu = \frac{1}{n_a}\sum_a [Y_a(\mathbf{1}) -  Y_a(\mathbf{0})]$. All randomness comes from treatment assignment. In the usual bipartite experimentation setting, the graph is defined so that an edge exists between an analysis unit and a randomization unit if the potential outcomes of the analysis unit depend on the treatment of the randomization unit. The next definition formalizes this.

\begin{defi} \label{def:bg}
A bipartite interference graph is a triple $(\mathcal{A},\mathcal{R},E)$ of analysis units $\mathcal{A}$, randomization units $\mathcal{R}$, and an adjacency matrix $E$ with representative element $e_{ar} \in \{0,1\}$. For all $a \in \mathcal{A}$, potential outcomes satisfy $Y_a(\mathbf{T}) = Y_a(\mathbf{T'})$ for all $\mathbf{T},\mathbf{T'}$ with $T_r = T_r'$ for all $r \in \mathcal{R}$ such that $e_{ar} = 1$.
\end{defi}
In this definition, the adjacency matrix $E$ restricts how each analysis unit's outcomes can vary in response to treatment assignments, and furthermore is invariant to $\mathbf{T}$. Beyond the trivial example of a fully connected graph, it is rarely plausible to assume that this knowledge is available in applications. We take a different approach, motivated by practical settings in which the analyst observes and wishes to make use of a bipartite graph which contains some important information about how treatments affect outcomes, but cannot plausibly satisfy the definition above. In particular, we allow for the bipartite graph to be determined by the treatment assignment.

\begin{figure}
    \centering
    \includegraphics[width=\linewidth]{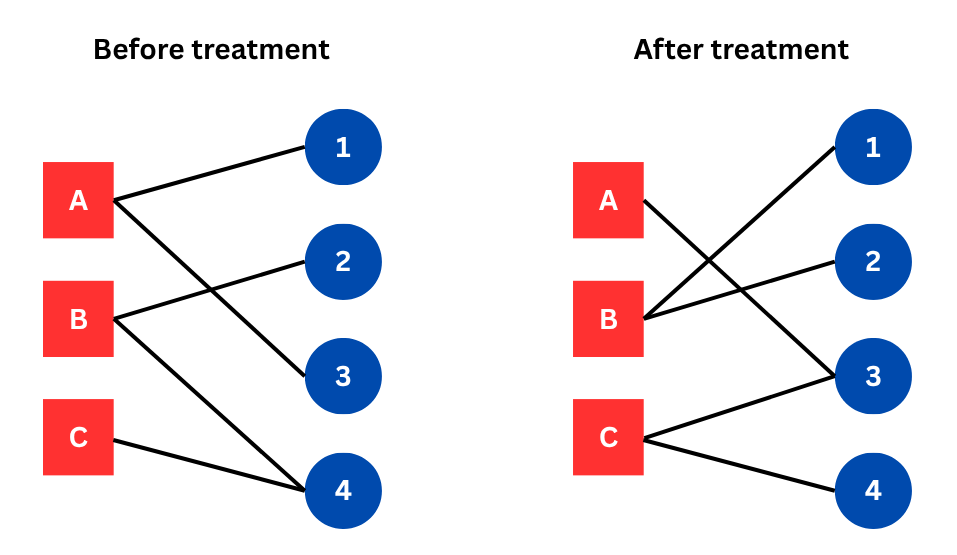}
    \caption{Endogenous graph example. Note that after the treatment is applied, edges (A-1) and (B-4) disappear, while new edges (B-1) and (C-3) emerge.}
    \label{fig:endog_graph}
\end{figure}

Formally, potential outcomes for edges are a function of treatment assignments. Each pair $(a,r) \in \mathcal{A} \times \mathcal{R}$ has a corresponding edge potential outcome function $E_{ar}:\{0,1\}^{n_r} \to \{0,1\} $, i.e.\ specifying whether the corresponding edge exists as a function of the treatment assignment vector. Let $\mathcal{R}_a(\mathbf{T}) = \{r \in \mathcal{R}:E_{ar}(\mathbf{T}) = 1\}$ denote the randomization units $a$ is connected to under the treatment $\mathbf{T}$. We assume that as long as a treatment assignment does not change the randomization units that analysis unit $a$ is connected to and does not change the treatment assignments to those randomization units, it cannot change $a$'s outcome. This gives the definition of an endogenous bipartite interference graph. 

\begin{defi} \label{def:ebg}
An endogenous bipartite interference graph is a triple $(\mathcal{A},\mathcal{R},E)$ of analysis units $\mathcal{A}$, randomization units $\mathcal{R}$, and an adjacency matrix $E$ with representative element $E_{ar} : \{0,1\}^{n_r} \rightarrow \{0,1\}$. For all $a \in \mathcal{A}$, potential outcomes satisfy $Y_a(\mathbf{T}) = Y_a(\mathbf{T'})$ for all $\mathbf{T},\mathbf{T'}$ with $\mathcal{R}_a(\mathbf{T}) = \mathcal{R}_a(\mathbf{T'})$ and $T_r = T_r'$ for $r \in \mathcal{R}_a(\mathbf{T})$.
\end{defi}

We assume throughout that for all $(a,r) \in \mathcal{A} \times \mathcal{R}$ the functions $E_{ar}(\cdot)$ are unknown, but their realizations given the treatment assignment $E_{ar}(\mathbf{T})$ are observed (see Figure \ref{fig:endog_graph} for illustration). Some observations on the relation between bipartite vs.\ unipartite and endogenous vs.\ exogenous interference graphs follow.

\begin{remark}
Definition~\ref{def:ebg} generalizes Definition~\ref{def:bg}, with the latter corresponding to the special case of the former where $\mathcal{R}_a(\mathbf{T})$ is constant.
\end{remark}

\begin{remark}
Definition~\ref{def:ebg} also encompasses the case of endogenous unipartite experimentation, which is the special case of $\mathcal{A} = \mathcal{R}$. In this case, we typically assume $E_{aa} = 1$ for all $a$, so that each unit's treatment assignment is allowed to affect itself. The interference graph is allowed to be directed, in the sense that $E_{aa'}$ need not necessarily equal $E_{a'a}$ for $a \ne a'$.
\end{remark}

\begin{remark}
Given any endogenous bipartite interference graph $(\mathcal{A},\mathcal{R},E)$, it is always possible to find \emph{some} (exogenous) bipartite interference graph $(\mathcal{A},\mathcal{R},E')$ consistent with $(\mathcal{A},\mathcal{R},E)$. Trivially, we can take $E'$ to be the fully connected bipartite graph $\mathcal{A} \times \mathcal{R}$. Somewhat less trivially, we can define $E'$ as the set of edges $(a,r)$ such that there exists a treatment realization $\mathbf{T}$ satisfying $E_{ar}(\mathbf{T}) = 1$. In principle, conventional methods for causal inference on networks could be applied on the graph $(\mathcal{A},\mathcal{R},E')$. This observation is of limited value to practitioners, if $E'$ is unobserved and they only observe a particular, treatment-dependent realization $E(\mathbf{T})$ of endogenous edges.
\end{remark}


Graphs built from behavioral signals may not satisfy Definition~\ref{def:bg} if they are likely to be causally affected by treatments, but they may still satisfy Definition~\ref{def:ebg}. For instance, a user-item graph in an online marketplace where edges represent shown items, may be causally affected by item-level treatments. Yet we might still assume that user outcomes remain unchanged if neither the set of shown items nor their treatments change. Similarly, in a Facebook Groups experiment, user outcomes may plausibly be unaffected if a user's group memberships and the treatments of those groups remain the same.

In what follows, we assume we have access to a pre-treatment graph, constructed from data before the experiment starts. It is denoted $G^{pre} = (\mathcal{A},\mathcal{R},E^{pre})$, with representative element $e_{ar}^\text{pre} \in \{0,1\}$.\footnote{Even though no-one is treated pre-experiment, we need not have $E^{pre} = E(\mathbf{0})$. The potential outcomes for the edges and the metric of interest, $(E(\mathbf{T}),Y_a(\mathbf{T}))$ should be interpreted as their values at some fixed time post-experiment start. So if edges form and disappear over time regardless of whether treatments are being applied, the pre-experiment edges $E^{pre}$ and the counterfactual edges that would exist without any treatment as measured at that future time, $E(\mathbf{0})$, may differ.} For conciseness, unless we wish to make their dependence on $\mathbf{T}$ explicit, we write the realized metric outcomes $Y_a(\mathbf{T})$ as $y_a$ and realized edge outcomes $E_{ar}(\mathbf{T})$ as $e_{ar}$. We now give examples of how naively applying existing bipartite estimators without accounting for the endogenous nature of the graph can bias conventional inference-aware estimators.

\paragraph{Examples of Edge Endogeneity Bias.}
\label{sec:bias}

Treating the realized graph $E(\mathbf{T})$ observed in the setting of Definition \ref{def:ebg} as fixed can bias otherwise unbiased estimators, including the Horvitz-Thompson (HT) inverse propensity weighted estimator (\citet{horvitz1952generalization}). We assume that treatment assignments are independent Bernoulli random variables with probability $p$. The HT estimator of the TTE is  

\begin{align*}
    \hat{\mu}_\text{HT} &= \frac{1}{n_a} \sum_{a = 1}^{n_a} y_a \left[ \frac{\prod_{r \in \mathcal{R}_a(\mathbf{T})} T_r }{p^{|\mathcal{R}_a(\mathbf{T})|}} - \frac{\prod_{r \in \mathcal{R}_a(\mathbf{T})} (1 - T_r) }{(1 - p)^{|\mathcal{R}_a(\mathbf{T})|}} \right],
\end{align*}
where the term in the square brackets is defined to be zero if $\mathcal{R}_a(\mathbf{T}) = \emptyset$.
The first term simply sums up outcomes $y_a$ for the ``fully treated'' analysis units, i.e.\ those units $a$ such that all their corresponding randomization units are treated ($\prod_{r \in \mathcal{R}_a(\mathbf{T})} T_r = 1$), and rescales by the factor $p^{-|\mathcal{R}_a(\mathbf{T})|}$ to account for the probability of that event. This gives the estimate of the average outcomes under $\mathbf{T} = \mathbf{1}$, and analogously the second term gives the estimate of average outcomes under $\mathbf{T} = \mathbf{0}$. By standard arguments \cite{aronow2017estimating}, this is an unbiased estimate of the TTE under graph exogeneity, where $\mathcal{R}_a(\mathbf{T})$ does not vary with $\mathbf{T}$.

Under graph endogeneity, this unbiasedness property no longer holds. We illustrate mechanisms by which bias can arise in the following minimal examples. There is a single randomization unit in each, implying the HT estimator and the exposure reweighted linear (ERL) estimator of \citet{harshaw2021design} are identical, so these examples also demonstrate bias in the ERL estimator.

\begin{example}\label{exmp1}
$\mathcal{R} = \{r_1\}$ and $\mathcal{A} = \{a_1\}$. Unit $r_1$ forms an edge with $a_1$ if and only if $T_1 = 1$. Outcomes for $a_1$ are $y\ne0$ regardless of the treatment assignment, and $T_1\sim\text{Bernoulli}(1/2)$. The HT estimator is $\frac{T_1 y}{1/2}$. This is $y$ in expectation, which is biased for the TTE of zero. In this example the number of edges varies with the treatment assignment, but that is not necessary for bias to exist, as the next example shows.
\end{example}

\begin{example}\label{exmp2}
$\mathcal{R} = \{r_1\}$ and $\mathcal{A} = \{a_1, a_2\}$. Unit $r_1$ forms an edge with $a_1$ if $T_1 = 1$, and with $a_2$ otherwise. Outcomes for $a_1$ and $a_2$ are $y_1$ and $y_2$ regardless of the treatment assignment, and $T_1\sim\text{Bernoulli}(1/2)$. The HT estimator is $\frac{T_1 y_1}{1/2} - \frac{(1 - T_1) y_2}{1/2}$ which has expectation $y_1 - y_2$, and again is biased for the TTE of zero. In this example, edge formation induces correlation between treatment intensity and analysis unit potential outcomes, generating bias.
\end{example}

The preceding examples distinguish between randomization and analysis units and so relate to bipartite experimentation. Graph endogeneity can also bias treatment effect estimators in the case of unipartite network interference where this distinction does not exist, as the next example shows.

\begin{example}\label{exmp3}
$\mathcal{R} = \mathcal{A} = \{a_1, a_2\}$. Outcomes for $a_1$ and $a_2$ are $y_1$ and $y_2$ regardless of treatment assignment, and $T_1$, $T_2$ are independent $\text{Bernoulli}(1/2)$ draws. A directed edge runs from $a_1$ to $a_2$ if and only if $T_1 = 1$. The HT estimator is 
\begin{align*}
\hat{\mu}_\text{HT} = \begin{cases}
-y_1 / 2 - y_2 / 2 &\text{if $T_1 = T_2 = 0$}\\
-y_1 / 2 + y_2 / 2 &\text{if $T_1 = 0, T_2 = 1$}\\
y_1 / 2 &\text{if $T_1 = 1, T_2 = 0$}\\
y_1 / 2 + y_2 / 4 &\text{if $T_1 = T_2 = 1$},\\
\end{cases}
\end{align*}
with expectation $y_2 / 16$, and is biased for the TTE of zero.
\end{example}

\section{Models of Edge Formation}
\label{sec:edge}
Just as estimating total treatment effects under exogenous interference requires imposing some additional structure on the interference mechanism \citep{basse2018limitations}, we will need to rule out arbitrary dependence of edges on treatment assignments. Let $S_{ar}$ be a subset of $\mathcal{R}$ satisfying $E_{ar}(\mathbf{T}) = E_{ar}(\mathbf{T'})$ for all $\mathbf{T}, \mathbf{T}'$ such that $T_s = T_s'$ for all $s \in S_{ar}$. That is, whether the edge $(a,r)$ exists depends only on the treatment assignments to the randomization units in $S_{ar}$. Examples of various edge formation assumptions follow.

\begin{itemize}   
    \item \emph{Unrestricted edges}. $S_{ar} = \mathcal{R}$. Edges can depend arbitrarily on treatments.
    \item \emph{Exogenous edges}. $S_{ar} = \emptyset$. Edges are unaffected by treatments, as per Definition \ref{def:bg}.    \item \emph{r-driven edges}. $S_{ar} = \{r\}$. Edges depend only on the randomization unit's treatment status. This can be interpreted as a SUTVA (Stable Unit Treatment Value Assumption) assumption on edge formation: edge outcomes depend only on the treatment assignment to the randomization unit in question, not by treatment assignments to other units.
    \item \emph{$S_a$-driven edges}. $S_{ar} = S_a \subset \mathcal{R}$. Analysis unit $a's$ edge formation depends only on treatment assignments to a subset of randomization units.
    \item \emph{$S_r$-driven edges}. $S_{ar} = S_r \subset \mathcal{R}$. Randomization unit $r's$ edge formation depends only on treatment assignments to a subset of randomization units.
\end{itemize}


The following statistical tests can provide evidence on how appropriate the assumptions of exogenous or $r$-driven edges may be.

\paragraph{Detecting general departures from edge exogeneity.}

Define the null hypothesis as $H_0: S_{ar}=\emptyset$ for all $a,r$, and
consider the following test statistic based on the binomial likelihood for each analysis unit's number of treated edges:
\begin{equation*}
W(\mathbf T, E(\mathbf T))=\sum_a \log \left[{\sum_r e_{ar} \choose \sum_r e_{ar} T_r}p^{\sum_r e_{ar} T_r} (1 - p)^{\sum_r e_{ar} (1 - T_r)} \right].
\end{equation*}
Under $H_0$, $E(\mathbf T)$ is independent of $\mathbf T$, and the conditional distribution of $W$ given $E$ can be simulated by resampling $\mathbf T$ according to the experiment design. Following the usual randomization inference logic, we obtain a valid statistical test by rejecting the null if the value of $W$ is larger than the upper $\alpha^{\textrm{th}}$ quantile of this null conditional distribution.

This test has power against a variety of departures from exogenous edges, including but not limited to $r$-driven edges. If treated randomization units disproportionately form new edges, this will lead to analysis units having more treated units than would be expected compared to the binomial benchmark. And if analysis units are more likely to form additional edges based on the treatment intensity of their existing edges, this will similarly lead to a departure from the binomial benchmark.\footnote{To build intuition for the former case, imagine units form edges if and only if they are treated. Then for all analysis units that have any edges, they are all maximally treated. For the latter case, imagine analysis units will form an edge with a single predetermined randomization unit, and then form another edge with another randomization unit if and only if the previous unit was treated. Then there will be no analysis units connected to a single treated randomization unit.} It is however not powered to detect all conceivable departures from exogenous edges, and there are somewhat contrived examples of endogenous edges which will not be detected by this test.\footnote{Consider the case where there is a single analysis unit and two randomization units, and $e_{11} = \mathbf{1}(T_2 = 1)$. The treatment intensity for the analysis unit conditional on an edge existing is indeed binomial, despite the fact that whether an edge exists in the first place depends on the other randomization unit.}

\paragraph{Detecting $r$-driven edges.}
If the alternative hypothesis is that edges only depend on the randomization unit's treatment status, a straightforward and valid procedure is to test whether treatments are generating more or fewer edges in aggregate with a simple difference-in-means \emph{t}-test. Let $E_r=\sum_a(e_{ar} -e_{ar}^\text{pre})$ be the net number of edges added relative to the pre-experiment graph. We can then test the null hypothesis that $\mathbb E[E_r\mid T_r=1]=\mathbb E[E_r\mid T_r=0]$.

For TTE estimation, we focus exclusively on the case of $r$-driven edges. This strikes a balance between meaningfully enriching the standard exogenous bipartite graph model, and still imposing enough structure on the problem to be able to derive useful statements on treatment effect magnitudes. This choice is also motivated by practicality—while setting $S_{ar}$ to be some strict superset of $\{r\}$ is more general, in applications there may be no clear principles available for constructing such sets. Note that we are relaxing the common assumption of the interference graph being fixed, and in this respect generalize, rather than specialize, existing work---the $r$-driven edges assumption is trivially satisfied for a fixed graph. Some restrictions on edges and outcomes are indeed necessary to make meaningful claims about treatment effects in this more general setting.

We opt for the $r$-driven model of edge endogeneity as the simplest non-trivial generalization of exogenous edges. Robustness under the violation of this assumption and relaxing the assumption to $S_r$-driven edges, where edge formation can depend on peers’ treatments, are interesting future directions to explore. We wish to note that, while it is attractive to dispense with this assumption entirely, we see it as being analogous to the required restrictions on the interference structure in the exogenous graph case—without some structure, causal inference is impossible (see, e.g., \citet{basse2018limitations}). 

\section{A Class of Unbiased TTE Estimators}
\label{sec:tte_estimators}

\subsection{Generalized Linear Exposure-Response Model}
We specify the outcome model $y_a = \alpha_a + \beta_a x_a$, where $x_a = \sum_r T_r e_{ar}w_{ar}$ is the weighted number of treated edges $a$ is connected to in the realized, post-treatment graph, and $w_{ar}$'s are known weights. The $w_{ar}$'s are a modeling choice, and we recommend the default choice of
\begin{equation*}
w_{ar}=1/(\text{\# pre-treatment edges connected to }a).
\end{equation*}
The researcher might choose to incorporate domain knowledge into these weights based on how they expect a node to be affected by others (e.g., a user may be more affected by friends with whom they message frequently, or a buyer may be more affected by a seller they have frequently bought from in the past). These weights play the same role as the weights in the outcome model of \citet{harshaw2021design}. This outcome model satisfies the restrictions on potential outcomes stipulated in Definition~\ref{def:ebg}, and is a generalization of the linear exposure-response model in \citet{harshaw2021design} to endogenous bipartite graphs. Outcomes are linear in the number of treated randomization units the analysis unit is exposed to, but the linear response is allowed to vary across analysis units. Treatments only affect outcomes through the observed edges, but the set of observed edges is potentially treatment dependent. By definition, the TTE is 
\begin{equation}
\mu =  \frac{1}{n_a}\sum_a [Y_a(\mathbf{1}) - Y_a(\mathbf{0})]=\frac{1}{n_a}\sum_a\mathcal W_a(\mathbf 1)\beta_a,
\label{equation:gate}
\end{equation}
where $\mathcal W_a(\mathbf 1)=\sum_rw_{ar}E_{ar}(\mathbf 1) $.

\subsection{Estimation Strategy}
We now give a high-level overview of our approach to estimating the TTE. The two components of \eqref{equation:gate} are $\beta_a$, the incremental effect of treating an additional randomization unit for analysis unit $a$, and $\mathcal W_a(\mathbf 1)$, the total number of edges $a$ would have under a full treatment roll-out. To construct an estimator of the TTE, we propose a class of unbiased estimators for each of these two components. We prove that the estimators for these two components are in addition uncorrelated with each other, so their product is an unbiased estimator for the summand  of \eqref{equation:gate}.

Considering $\beta_a$ first, \citet{harshaw2021design} propose $y_a(x_a - \mathbb{E}{x_a})/{\textrm{Var}(x_a)}$ as an unbiased estimator in the exogenous bipartite graph case. This estimator is no longer feasible because the expectation and variance of $x_a$ are unknown, due to the endogenous nature of the graph edges. Moreover, as the examples in Section~\ref{sec:bias} show, naively applying this estimator assuming the realized graph is exogenous will lead to bias. To remedy this issue, we borrow from the literature on instrumental variables. Let $z_a^u = \sum_r T_r u_{ar}$ be a weighted exposure variable, where $u_{ar}$'s are fixed weights chosen by the researcher. If $\textrm{Cov}(x_a, z^u_a)$ is known and non-zero, we have a class of unbiased estimators for $\beta_a$,
\begin{equation}
\left\{\hat\beta_a^u=y_a(z^u_a - \mathbb{E}{z^u_a})/\textrm{Cov}(x_a, z^u_a)\mid u:\mathcal A\times\mathcal R\to\mathbb R\right\}.
\label{equation:beta_hat}
\end{equation}

For $\mathcal W_a(\mathbf 1)$, let
$$ \widehat{\mathcal W_{ar}(\mathbf 1)}^c = \frac{T_r w_{ar}(e_{ar} - c_{ar})}{p}+w_{ar}c_{ar}.$$
Under $r$-driven edges it follows that $\mathbb{E}[\widehat{\mathcal W_{ar}(\mathbf 1)}^c] = w_{ar}\mathbb{E}[e_{ar} \mid T_r = 1] = w_{ar}E_{ar}(\mathbf{1})$, and so a class of unbiased augmented inverse probability weighted estimators is
\begin{equation}
\left\{\widehat{\mathcal W_a(\mathbf 1)}^c=\sum_r \widehat{\mathcal W_{ar}(\mathbf 1)}^c 
 \mid c:\mathcal A\times\mathcal R\to\mathbb R\right\}.
\label{equation:w_hat}
\end{equation}

Our estimation strategy involves giving conditions ensuring that $\textrm{Cov}(x_a, z^u_a)$ is known and non-zero, and that our estimator for $\beta_a$ is uncorrelated with our estimator for $\mathcal W_a(\mathbf 1)$. The key assumption is the existence of an observed ``anchor'' subgraph, the edges of which would exist under any treatment. This suffices for unbiasedness of our TTE estimator. Consistency and asymptotic normality follow under additional assumptions on graph sparsity, which limit dependence across analysis units and allow the application of convergence theorems for dependent data.

\subsection{Using Anchor Instruments}
\label{sec:gate-anchor}



In general, though the experimental design and hence the joint distribution of $\mathbf{T}$ is known, the covariance between the treatment intensity $x_a = \sum_r T_r e_{ar}w_{ar}$ and the instrument $z_a^u = \sum_r T_r u_{ar}$ is unknown. This is because it is a function of the unobserved values that the random variables $e_{ar}$ would take on under counterfactual treatment assignments. This suggests a path forward for a particular class of instruments for which $\textrm{Cov}(x_a, z_a^u)$ can be calculated. If there were a known set of pre-experiment edges that would continue to exist regardless of treatment assignments, this would allow us to circumvent the difficulties posed by not observing $e_{ar}$ under counterfactual treatments. To start formalizing these observations, we present the definition of an anchor subgraph.

\begin{defi}[Anchor Subgraph]
In Definition~\ref{def:ebg}, we say $G\subset\mathcal A\times\mathcal R$ is an anchor subgraph, if $E_{ar}(\mathbf T)=1$ for any $\mathbf T\in\{0,1\}^{n_r}$, $(a,r)\in G$.
\end{defi}

An anchor subgraph is a set of edges that would exist under any treatment.\footnote{For our Theorem~\ref{theorem:unbiased} and Theorem~\ref{theorem:consistent}, we can relax this assumption to requiring only $E_{ar}(\mathbf{1})=1$ under $r$-driven edges. Thus the existence of a known set of edges in the pre-experiment graph which satisfy the anchor subgraph property can be seen as a milder version of a ``compliance''-type assumption familiar from the instrumental variables literature \citep{angrist1996identification}, which in this setting would amount to requiring that all pre-treatment edges exist after the treatment has been applied.}  A subgraph in the pre-experiment data may plausibly satisfy this property, in which case we can use it to construct an instrument for the endogenous $x_a$'s.


As a practical example, consider Facebook groups, where the bipartite graph links users (analysis units) to groups (randomization units). User outcomes depend on the treatment of connected groups, and those connections may themselves be affected by the treatment. To construct an anchor subgraph, we can identify groups a user is unlikely to leave—e.g., using a predictive model trained on pre-treatment data. Similarly, in unipartite settings, persistent friendships may serve as anchor edges. These assumptions can be validated empirically—for any candidate anchor subgraph, we can estimate what fraction of edges disappear under treatment. To continue the Facebook groups example, less than $0.5\%$ of all group memberships disappear over two weeks. By contrast, experimental treatments may cause a larger number of new memberships to form over the same period, with a recent test generating a rate of edge formation of 0.75\%.

The following theorem gives conditions for the TTE estimator we propose to be unbiased. Assumption (a) of $r$-driven edges restricts how treatment assignment affects edge creation or deletion and is used throughout the proof. Assumption (b) describes the experimental design. Assumption (c) requires that the variation in the instrument $z_a^{u}$ is entirely driven by treatment assignments to the known anchor subgraph $G$. Assumption (d) ensures $\textrm{Cov}(\hat\beta_a^u,\widehat{\mathcal W_a(\mathbf 1)}^c) = 0$. Assumptions (e) and (f) guarantee that the instrument $z_a^{u}$ has some non-zero correlation with the treatment intensity $x_a$. 

\begin{theorem}
In the setting of Definition~\ref{def:ebg}, suppose $G$ is a known anchor subgraph. If (a) edges are $r$-driven; (b) the treatment assignments are independent Bernoulli random variables with probability $p$; (c) $\{(a,r)\mid u_{ar}\ne0\}\subset G$; (d) $c_{ar}=\mathbb I((a,r)\in G)$; (e) $w_{ar}\ne0$ for all $(a,r) \in G$; and (f) $|\{r\mid u_{ar}\ne0\}|>0$ for all $a$, then
\begin{equation}
\hat\mu^{u,c}=\frac{1}{n_a}\sum_a\hat\beta^u_a\cdot\widehat{\mathcal W_a(\mathbf 1)}^c
\label{equation:mu-hat}
\end{equation}
is an unbiased estimator of the TTE in \eqref{equation:gate}.
\label{theorem:unbiased}
\end{theorem}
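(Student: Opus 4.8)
The plan is to prove unbiasedness term by term. Since $\hat\mu^{u,c}=\frac{1}{n_a}\sum_a\hat\beta_a^u\,\widehat{\mathcal W_a(\mathbf 1)}^c$, by linearity of expectation it suffices to show $\mathbb E[\hat\beta_a^u\,\widehat{\mathcal W_a(\mathbf 1)}^c]=\mathcal W_a(\mathbf 1)\beta_a$ for each $a$; summing over $a$ and dividing by $n_a$ then reproduces the representation $\mu=\frac{1}{n_a}\sum_a\mathcal W_a(\mathbf 1)\beta_a$ from \eqref{equation:gate}. I would obtain this per-unit product expectation from three facts: (i) $\hat\beta_a^u$ is unbiased for $\beta_a$; (ii) $\widehat{\mathcal W_a(\mathbf 1)}^c$ is unbiased for $\mathcal W_a(\mathbf 1)$; and (iii) the two estimators are uncorrelated, so the expectation of the product factors.

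For (i), substitute the outcome model $y_a=\alpha_a+\beta_a x_a$ into $\hat\beta_a^u=y_a(z_a^u-\mathbb E z_a^u)/\textrm{Cov}(x_a,z_a^u)$. Because $\mathbb E[z_a^u-\mathbb E z_a^u]=0$ the intercept drops out and $\mathbb E[y_a(z_a^u-\mathbb E z_a^u)]=\beta_a\,\textrm{Cov}(x_a,z_a^u)$, so $\mathbb E[\hat\beta_a^u]=\beta_a$ provided $\textrm{Cov}(x_a,z_a^u)$ is known and nonzero. This is where the anchor structure enters: under (a) each $e_{ar}$ is a function of $T_r$ alone, and under (c) the instrument is supported on $G$, where anchor edges force $T_r e_{ar}=T_r$. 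A direct computation then collapses the double sum to $\textrm{Cov}(x_a,z_a^u)=p(1-p)\sum_r w_{ar}u_{ar}$, a quantity computable from known weights and the design $p$, and nonzero by (e) and (f). For (ii), I would reuse the calculation given just before the theorem: under (a) and (b) one has $\mathbb E[\widehat{\mathcal W_{ar}(\mathbf 1)}^c]=w_{ar}E_{ar}(\mathbf 1)$ for each $r$, and summing over $r$ gives $\mathbb E[\widehat{\mathcal W_a(\mathbf 1)}^c]=\mathcal W_a(\mathbf 1)$; this step holds for any constant $c_{ar}$ and does not by itself require (d).

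Step (iii) is the crux. The key observation is that the choice $c_{ar}=\mathbb I((a,r)\in G)$ in (d) makes the anchor contribution to $\widehat{\mathcal W_a(\mathbf 1)}^c$ deterministic: for $(a,r)\in G$, the identity $T_r e_{ar}=T_r$ together with $c_{ar}=1$ forces $\widehat{\mathcal W_{ar}(\mathbf 1)}^c=w_{ar}$, a constant. Hence all randomness in $\widehat{\mathcal W_a(\mathbf 1)}^c$ comes from treatments of non-anchor units, whereas by (c) all randomness in $z_a^u-\mathbb E z_a^u$ comes from treatments of anchor units; these index sets are disjoint and, by (b), carry independent treatments. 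I would formalize this by splitting $x_a=x_a^{G}+x_a^{G^c}$ into its anchor and non-anchor parts, writing $A=z_a^u-\mathbb E z_a^u$ (anchor-measurable, mean zero) and $W=\widehat{\mathcal W_a(\mathbf 1)}^c$ (a constant plus a non-anchor-measurable term), and expanding $\textrm{Cov}(x_a,z_a^u)\,\mathbb E[\hat\beta_a^u W]=\alpha_a\mathbb E[AW]+\beta_a\mathbb E[x_a AW]$. The term $\mathbb E[AW]$ vanishes since $A$ and $W$ are independent with $\mathbb E[A]=0$; in $\mathbb E[x_a AW]$, the non-anchor piece $\mathbb E[x_a^{G^c}AW]$ vanishes for the same reason, while the anchor piece factors as $\mathbb E[x_a^{G}A]\,\mathbb E[W]$, and $\mathbb E[x_a^{G}A]=\textrm{Cov}(x_a,z_a^u)$ because $\textrm{Cov}(x_a^{G^c},z_a^u)=0$. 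Collecting terms gives $\mathbb E[\hat\beta_a^u\,\widehat{\mathcal W_a(\mathbf 1)}^c]=\beta_a\mathcal W_a(\mathbf 1)$, the desired per-unit identity.

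The main obstacle is step (iii): the product $\hat\beta_a^u\,\widehat{\mathcal W_a(\mathbf 1)}^c$ is not a product of independent variables, since $\hat\beta_a^u$ contains $y_a$, and hence $x_a$, which depends on both anchor and non-anchor treatments. The careful accounting of which factors are anchor- versus non-anchor-measurable, and the verification that precisely the right cross terms cancel, is where the argument must be executed with care. Assumption (d) is the linchpin: by rendering the anchor component of $W$ non-random, it severs the dependence between $W$ and the instrument $A$ that would otherwise obstruct the factorization.
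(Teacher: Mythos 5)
Your proof is correct, and it reaches the paper's conclusion by a genuinely different execution of the key step. The paper reduces the theorem to the same three facts you identify---unbiasedness of $\hat\beta_a^u$, unbiasedness of $\widehat{\mathcal W_a(\mathbf 1)}^c$, and factorization of the product expectation---but it establishes the factorization as a standalone uncorrelatedness result (Lemma~\ref{lemma:w-a-z}) via the law of total covariance: conditioning on the anchor treatments $T_A$, $A=\{r\mid u_{ar}\ne0\}$, it shows the expected conditional covariance vanishes because $\textrm{Cov}(\widehat{\mathcal W_a(\mathbf 1)}^c, y_a\mid T_A)$ is a constant (computed via Lemma~\ref{lemma:cov}) multiplying the mean-zero variable $z_a^u-\mathbb E z_a^u$, and the covariance of conditional means vanishes because $\mathbb E[\widehat{\mathcal W_a(\mathbf 1)}^c\mid T_A]$ is constant. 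Your argument replaces all conditioning with a direct expansion: split $x_a$ into anchor and non-anchor parts, observe that assumption (d) renders the anchor contribution to $\widehat{\mathcal W_a(\mathbf 1)}^c$ deterministic---so that $\widehat{\mathcal W_a(\mathbf 1)}^c$ depends only on non-anchor treatments while $z_a^u$ depends only on anchor treatments---and then kill each cross term by independence of disjoint treatment coordinates. The driving insight is identical (it is exactly what makes both the $X=0$ and $Y=0$ terms vanish in the paper's lemma), but your version is more elementary: it needs no conditional covariance manipulations and no auxiliary covariance lemma beyond the one-line computation of $\textrm{Cov}(x_a,z_a^u)$, and it makes transparent precisely where each of (a)--(d) enters. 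One caveat you share with the paper: concluding that $\textrm{Cov}(x_a,z_a^u)=p(1-p)\sum_r w_{ar}u_{ar}$ is nonzero ``by (e) and (f)'' is slightly too quick, since the summands are individually nonzero but could in principle cancel; because $u$ is researcher-chosen this is harmless in practice (take, e.g., $u_{ar}=c_{ar}\sign(w_{ar})$, which makes every summand positive), but neither your proof nor the paper's states this explicitly.
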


Let $R_a=\{r\in\mathcal R\mid E_{ar}(\mathbf{1})=1\}$ be the set of randomization units connected to $a$ under treatment, and $A_r=\{a\in\mathcal A\mid E_{ar}(\mathbf{1})=1\}$ the set of analysis units connected to $r$ under treatment. Let $V_a=\{r\in\mathcal R\mid u_{ar}\ne0\}$ be the set of randomization units connected to $a$ in the anchor subgraph. To see the time complexity of computing \eqref{equation:mu-hat}, we note that
\begin{equation*}
\hat\beta^u_a=\frac{y_a\sum_ru_{ar}(T_r-p)}{p(1-p)\sum_rw_{ar}u_{ar}}
\end{equation*}
requires $\Theta(|V_a|)$ calculations,
and $\widehat{\mathcal W_a(\mathbf 1)}^c$ requires $\Theta(|R_a|)$ calculations. Thus, the total time complexity is $\Theta\left(\sum_a|R_a|\right)$.

To characterize the limiting behavior of the estimator proposed in Theorem~\ref{theorem:unbiased}, we consider an asymptotic regime in which the bipartite graph increases in size. If the graph is sufficiently sparse in this limit—i.e.\ the maximum degree of the analysis units and randomization units grows sufficiently slowly relative to the number of analysis units—then we can apply convergence results for dependent data, which allow us to derive consistency and asymptotic normality. Asymptotic normality relies on an application of Stein's method \citep{stein1986approximate,ross2011fundamentals}, which has proven to be broadly applicable in demonstrating asymptotic normality in network settings \citep{chin2018central,harshaw2021design}. Theorem \ref{theorem:consistent}
below summarizes these results. Assumptions (a), (b) and (c) ensure the instrument is positively correlated with the treatment exposure level $x_a$. Assumption (c) limits the total weight of randomization units connected to the same user and the discrepancy between those weights, and assumption (d) bounds outcomes.

\begin{theorem}
We maintain the assumptions in Theorem~\ref{theorem:unbiased}, and further assume (a) $|V_a|>0$; (b) $u_{ar}\ge0$; (c) $W_l/|R_a|\le|w_{ar}|\le W_h/|R_a|$ for $r\in R_a$; and (d) $|y_a|\le M$, where $p$, $1-p$, $W_l$, $W_h$, $M$ are constants bounded away from $0$ and $\infty$. Then, letting $d_{\mathcal A}=\max_{a\in \mathcal A}|R_a|$ and $d_{\mathcal R}=\max_{r\in \mathcal R}|A_r|$, \begin{enumerate}
    \item $\text{Var}(\hat\mu^{u,c})=O(d_{\mathcal A}^3d_{\mathcal R}/n_a)$, and thus $\hat\mu^{u,c}$ is consistent if $d_{\mathcal A}^3d_{\mathcal R}/n_a\to0$;
    \item if $n_a\text{Var}(\hat\mu^{u,c})$ is bounded away from zero and $d_{\mathcal A}^{10}d_{\mathcal R}^4/n_a\to0$, then ${(\hat\mu^{u,c}-\mu})/{\sqrt{\text{Var}(\hat\mu^{u,c})}}$ converges in distribution to $\mathcal N(0,1)$.
\end{enumerate}
\label{theorem:consistent}
\end{theorem}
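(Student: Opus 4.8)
The plan is to write $\hat\mu^{u,c}-\mu=\frac1{n_a}\sum_a\xi_a$ with $\xi_a=g_a-\mathbb E g_a$ and $g_a=\hat\beta_a^u\,\widehat{\mathcal W_a(\mathbf 1)}^c$, which is centered by the unbiasedness in \Cref{theorem:unbiased}, and to treat the sum as locally dependent. First I would pin down the dependency structure. Under $r$-driven edges $e_{ar}$ depends only on $T_r$, and the outcome model gives $x_a=\sum_r T_r e_{ar}w_{ar}=\sum_{r\in R_a}T_r w_{ar}$, since an edge present only when $T_r=0$ contributes zero to $x_a$ for both values of $T_r$. As $V_a\subset R_a$ and, with $c_{ar}=\mathbb I((a,r)\in G)$, $\widehat{\mathcal W_a(\mathbf 1)}^c$ collapses to $\frac1p\sum_{r\in R_a,\,(a,r)\notin G}T_r w_{ar}+\sum_{(a,r)\in G}w_{ar}$, every ingredient of $g_a$ is a deterministic function of $\{T_r:r\in R_a\}$. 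Hence $g_a\ci g_{a'}$ whenever $R_a\cap R_{a'}=\emptyset$, and the induced dependency graph on $\mathcal A$ has maximum degree at most $\sum_{r\in R_a}|A_r|\le d_{\mathcal A}d_{\mathcal R}$.

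Next I would establish the deterministic bound $\|g_a\|_\infty=O(d_{\mathcal A})$. Since $\hat\beta_a^u$ is invariant to rescaling $u$, normalize $\sum_{r\in V_a}u_{ar}=1$; then $|y_a(z_a^u-\mathbb E z_a^u)|=O(1)$ by (d), while the deterministic denominator $\text{Cov}(x_a,z_a^u)=p(1-p)\sum_{r\in V_a}w_{ar}u_{ar}$ is, by (b)--(c), positive and of order $1/|R_a|$, hence at least $c/d_{\mathcal A}$ for some constant $c>0$; thus $|\hat\beta_a^u|=O(d_{\mathcal A})$, and the weight bounds give $|\widehat{\mathcal W_a(\mathbf 1)}^c|\le W_h/p+W_h=O(1)$. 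For part (i), expand $\text{Var}(\hat\mu^{u,c})=n_a^{-2}\sum_{a,a'}\text{Cov}(g_a,g_{a'})$: only the $O(n_a d_{\mathcal A}d_{\mathcal R})$ pairs with $R_a\cap R_{a'}\ne\emptyset$ survive, each bounded by $2\|g_a\|_\infty\|g_{a'}\|_\infty=O(d_{\mathcal A}^2)$, giving $\text{Var}(\hat\mu^{u,c})=O(d_{\mathcal A}^3 d_{\mathcal R}/n_a)$; consistency then follows from unbiasedness and $L^2$ convergence.

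For part (ii) I would invoke a Stein's-method normal-approximation bound for sums with bounded dependency neighborhoods \citep{ross2011fundamentals}. With $\sigma^2=\text{Var}(\sum_a g_a)=n_a^2\text{Var}(\hat\mu^{u,c})$, which the lower-bound hypothesis keeps at least of order $n_a$, dependency degree $D=d_{\mathcal A}d_{\mathcal R}$, and moments $\sum_a\mathbb E|\xi_a|^3=O(n_a d_{\mathcal A}^3)$, $\sum_a\mathbb E\xi_a^4=O(n_a d_{\mathcal A}^4)$, the Wasserstein distance between $(\hat\mu^{u,c}-\mu)/\sqrt{\text{Var}(\hat\mu^{u,c})}$ and $\mathcal N(0,1)$ is controlled by a leading term $D^2\sigma^{-3}\sum_a\mathbb E|\xi_a|^3=O(d_{\mathcal A}^5 d_{\mathcal R}^2/\sqrt{n_a})$ together with a term $D^{3/2}\sigma^{-2}(\sum_a\mathbb E\xi_a^4)^{1/2}=O(d_{\mathcal A}^{7/2}d_{\mathcal R}^{3/2}/\sqrt{n_a})$. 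The former vanishes exactly when $d_{\mathcal A}^{10}d_{\mathcal R}^4/n_a\to0$, which (as $d_{\mathcal A},d_{\mathcal R}\ge 1$) also kills the latter, so the standardized statistic converges weakly to $\mathcal N(0,1)$.

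The step I expect to be the main obstacle is the dependency-structure reduction: arguing cleanly that, despite edge endogeneity, both $x_a$ and $\widehat{\mathcal W_a(\mathbf 1)}^c$ depend on $\mathbf T$ only through $\{T_r:r\in R_a\}$ — in particular that edges present only when untreated drop out of both — since this is what licenses using $d_{\mathcal A}$ rather than a larger edge-degree throughout and makes the moment and degree exponents line up with the stated rates. Verifying the positive, order-$1/d_{\mathcal A}$ lower bound on $\text{Cov}(x_a,z_a^u)$ under (b)--(c) is the other delicate point, as it fixes the $d_{\mathcal A}$-dependence of $\|g_a\|_\infty$ that propagates into every subsequent bound.
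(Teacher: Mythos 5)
Your proposal is correct and follows essentially the same route as the paper's own proof: the same per-unit decomposition $\hat\mu^{u,c}-\mu=\frac{1}{n_a}\sum_a(\hat\beta_a^u\widehat{\mathcal W_a(\mathbf 1)}^c-\mathbb E[\hat\beta_a^u\widehat{\mathcal W_a(\mathbf 1)}^c])$, the same reduction showing each summand is a function of $T_{R_a}$ (via $T_re_{ar}=T_re_{ar}(1)$ and the collapse of $\widehat{\mathcal W_a(\mathbf 1)}^c$), hence dependency degree at most $d_{\mathcal A}d_{\mathcal R}$, the same $O(d_{\mathcal A})$ uniform bound obtained by normalizing $u$ and lower-bounding $\mathrm{Cov}(x_a,z_a^u)$ by order $1/|R_a|$, and the same Ross (2011) dependency-graph Stein's-method lemma with identical rate arithmetic yielding $O(d_{\mathcal A}^5d_{\mathcal R}^2/\sqrt{n_a})$ and the condition $d_{\mathcal A}^{10}d_{\mathcal R}^4/n_a\to0$. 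The only differences are cosmetic (e.g., bounding covariances by $2\|g_a\|_\infty\|g_{a'}\|_\infty$ rather than Cauchy--Schwarz).
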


With asymptotic normality and a variance estimator, we can construct confidence intervals. Section~\ref{subsec:testing} provides a modified estimator with tractable variance that can be used as a proxy for uncertainty quantification for the original estimator. Section~\ref{sec:simulation} shows this proxy variance yields confidence intervals with close-to-nominal coverage in simulations.

\subsection{Statistical Testing under the Sharp Null}
\label{subsec:testing}
Based on the estimators of \eqref{equation:beta_hat}, we can construct a test for the sharp null $H_0^\text{sharp}$: $\beta_a=0$ for all $a$. Under $H_0^\text{sharp}$, $y_a=\alpha_a$ is independent of $\mathbf T$. Define a modified variant of \eqref{equation:mu-hat},
\begin{equation}
\begin{aligned}
\tilde\mu^{u,c}&=\frac1{n_a}\sum_a\hat\beta_a^u\sum_r(w_{ar}c_{ar})\\
&=\frac1{n_a}\sum_a\frac{y_a(z_a^u(\mathbf T)-\mathbb Ez_a^u)}{\text{Cov}(x_a,z_a^u)}\sum_r(w_{ar}c_{ar}),
\end{aligned}
\label{equation:mu-tilde-for-test}
\end{equation}
where the dependence of $z_a^u$ on $\mathbf T$ is explicit. Under $H_0^\text{sharp}$ and the assumptions of Theorem~\ref{theorem:unbiased}, all randomness in $\tilde\mu^{u,c}$ is from $z_a^u(\mathbf T)$, the distribution of which is known.
\begin{theorem}
We maintain the assumptions in Theorem~\ref{theorem:consistent}. Define the sharp null $H_0^\textnormal{sharp}$: $\beta_a=0$ for all $a$. The $H_0^\textnormal{sharp}$-null distribution of $|\tilde\mu^{u,c}|$ in \eqref{equation:mu-tilde-for-test} conditional on $\{y_a: a\in\mathcal A\}$ is known. Denote $c_\alpha$ as its $\alpha$-upper quantile, $\alpha\in(0,1)$. Rejecting if $|\tilde\mu^{u,c}| \ge c_\alpha$ is a valid size-$\alpha$ test for $H_0^\textnormal{sharp}$, and the power goes to $1$ if $d_{\mathcal A}^3d_{\mathcal R}/n_a\to0$ and 
\begin{equation*}
    \liminf\left|\frac1{n_a}\sum_a\beta_a\sum_r(w_{ar}c_{ar})\right|>0.
\end{equation*}
\label{theorem:test}
\end{theorem}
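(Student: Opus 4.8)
The plan is to establish the three claims in turn: that the $H_0^{\textnormal{sharp}}$-null distribution is known, that the resulting test controls size, and that its power tends to one. The first two follow from the randomization inference logic already used in \Cref{sec:edge}. Under $H_0^{\textnormal{sharp}}$ we have $\beta_a=0$, so $y_a=\alpha_a$ does not depend on $\mathbf T$, and conditioning on $\{y_a\}$ leaves the design distribution of $\mathbf T$ intact. Under the assumptions of \Cref{theorem:unbiased} the quantities $\mathbb{E}z_a^u=p\sum_r u_{ar}$, $\Cov(x_a,z_a^u)$, and $\sum_r w_{ar}c_{ar}$ are all known constants, and by assumption (c) of \Cref{theorem:unbiased} the instrument $z_a^u(\mathbf T)=\sum_{r\in V_a}T_r u_{ar}$ depends on $\mathbf T$ only through anchor-edge treatments. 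Hence, conditional on $\{y_a\}$, $\tilde\mu^{u,c}$ is a known deterministic function of the random vector $\mathbf T$ whose law is known, so the conditional distribution of $|\tilde\mu^{u,c}|$ is known and can be obtained by resampling $\mathbf T$. Taking $c_\alpha$ to be its $\alpha$-upper quantile yields $\mathbb{P}(|\tilde\mu^{u,c}|\ge c_\alpha)\le\alpha$ under the null, establishing validity.

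For power I would first compute the first two moments of $\tilde\mu^{u,c}$ under the alternative. Since $\sum_r w_{ar}c_{ar}$ is deterministic and $\hat\beta_a^u$ is unbiased for $\beta_a$ (the computation from \Cref{theorem:unbiased}, using $\mathbb{E}[y_a(z_a^u-\mathbb{E}z_a^u)]=\beta_a\Cov(x_a,z_a^u)$ under the outcome model), it follows that
\[
\mathbb{E}[\tilde\mu^{u,c}]=\frac1{n_a}\sum_a\beta_a\sum_r w_{ar}c_{ar}=:\mu^\star,
\]
which is exactly the quantity appearing in the $\liminf$ condition and is therefore bounded away from zero. Next I would bound $\Var(\tilde\mu^{u,c})$. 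As $\tilde\mu^{u,c}$ is a sum of the $\hat\beta_a^u$ weighted by deterministic constants, $\Cov(\hat\beta_a^u,\hat\beta_{a'}^u)$ vanishes unless $a$ and $a'$ share randomization units, and the same dependency-graph accounting used in \Cref{theorem:consistent}(i), together with $|y_a|\le M$ and assumption (c), gives $\Var(\tilde\mu^{u,c})=O(d_{\mathcal A}^3 d_{\mathcal R}/n_a)\to0$. By Chebyshev, $|\tilde\mu^{u,c}|\to|\mu^\star|$ in probability.

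It then remains to control $c_\alpha$ under the alternative, where the observed $y_a=\alpha_a+\beta_a x_a$ entering the resampling distribution are themselves random but uniformly bounded by $M$. Writing $\tilde\mu^{u,c}(\mathbf T')=\frac1{n_a}\sum_a\kappa_a(z_a^u(\mathbf T')-\mathbb{E}z_a^u)$ with $\kappa_a=y_a\sum_r w_{ar}c_{ar}/\Cov(x_a,z_a^u)$, the resampling distribution has mean zero because $\mathbb{E}_{\mathbf T'}[z_a^u(\mathbf T')-\mathbb{E}z_a^u]=0$, and its variance—a sum of $\kappa_a\kappa_{a'}\Cov(z_a^u(\mathbf T'),z_{a'}^u(\mathbf T'))$ over pairs with $V_a\cap V_{a'}\ne\emptyset$—admits a bound $B_{n_a}=O(d_{\mathcal A}^3 d_{\mathcal R}/n_a)$ that holds uniformly over all realizations with $|y_a|\le M$. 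Chebyshev applied to the resampling law then forces $c_\alpha\le\sqrt{B_{n_a}/\alpha}\to0$ along a deterministic sequence. Combining the pieces, $\mathbb{P}(|\tilde\mu^{u,c}|\ge c_\alpha)\ge\mathbb{P}(|\tilde\mu^{u,c}|\ge\sqrt{B_{n_a}/\alpha})\to1$, since $|\tilde\mu^{u,c}|$ concentrates at $|\mu^\star|>0$ while the threshold vanishes.

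The main obstacle is this last step: the critical value $c_\alpha$ is not a fixed number but a functional of the realized outcomes $\{y_a\}$, which under the alternative are treatment-dependent random variables. The device that makes the argument go through cleanly is that the outcome bound $|y_a|\le M$ delivers a variance bound for the resampling distribution that is \emph{uniform} over all admissible outcome vectors, so $c_\alpha$ is squeezed to zero along a deterministic sequence irrespective of the realization. The remaining care is in verifying the two variance estimates—one for $\tilde\mu^{u,c}$ under the alternative and one for the resampling law—by reusing the dependency-graph bookkeeping behind \Cref{theorem:consistent}, and in confirming that both scale as $d_{\mathcal A}^3 d_{\mathcal R}/n_a$ so that the single stated sparsity condition suffices for both.
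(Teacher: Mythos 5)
Your proposal is correct and follows essentially the same route as the paper's proof: validity by randomization-inference logic, squeezing $c_\alpha\to0$ via a variance bound on the conditional null (resampling) distribution that is uniform over outcome vectors with $|y_a|\le M$, concentration of $\tilde\mu^{u,c}$ around $\mu^*=\frac1{n_a}\sum_a\beta_a\sum_r(w_{ar}c_{ar})$ via unbiasedness and the $O(d_{\mathcal A}^3d_{\mathcal R}/n_a)$ variance bound, and then combining the vanishing threshold with the non-vanishing $|\mu^*|$. The only cosmetic difference is that you bound $c_\alpha\le\sqrt{B_{n_a}/\alpha}$ by Chebyshev applied to the resampling law, whereas the paper derives the equivalent bound from $\Var(\tilde\mu^{u,c}\mid y)=\E[(\tilde\mu^{u,c})^2\mid y]\ge c_\alpha^2\alpha$.
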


The time complexity to calculate \eqref{equation:mu-tilde-for-test} is $\Theta\left(\sum_a|V_a|\right)$. The exact $c_\alpha$ is expensive to compute, but noting that $\text{E}[\tilde\mu^{u,c}\mid \{y_a: a\in\mathcal A\}, H_0^\text{sharp}]=0$, we can obtain a more conservative cutoff based on $\text{Var}(\tilde\mu^{u,c}\mid \{y_a: a\in\mathcal A\}, H_0^\text{sharp})$, the time complexity to calculate which is also $\Theta\left(\sum_a|V_a|\right)$, because we have
\begin{equation*}
\text{Var}(\tilde\mu^{u,c}\mid \{y_a: a\in\mathcal A\}, H_0^\text{sharp})
=\frac{1}{p(1-p)}\sum_r\left(\sum_a\frac{y_aC_au_{ar}}{n_aU_a}\right)^2,
\end{equation*}
where $C_a=\sum_rw_{ar}c_{ar}$ and $U_a=\sum_rw_{ar}u_{ar}$.

\section{Simulations}
\label{sec:simulation}
\subsection{Simulation Setting}

We let $n_a=2000$ and $n_r=400$. We generate pre-treatment edges $e_{ar}^\text{pre}$'s as independent $\text{Bern}(0.1)$ random variables. Our algorithm takes the anchor edges to be the same as the pre-treatment edges. The edge outcome function is $r$-driven and defined as 
$E_{ar}(T_r)= e_{ar}^\text{pre}+T_r(1-2e_{ar}^\text{pre})\text{Bern}(p_\text{anchor})$ if $(a,r)$ is anchor, and $e_{ar}^\text{pre}+T_r(1-2e_{ar}^\text{pre})\text{Bern}(p_\text{non-anchor})$ otherwise.
That is, the edge outcome is the same as the pre-treatment edge if no treatment is applied to the randomization unit; otherwise, it will flip to the opposite of the pre-treatment edge with probability $p_\text{anchor}$ if the pre-treatment edge exists, and $p_\text{non-anchor}$ if it does not.

The $\alpha_a$'s are sampled from independent $\text{Unif}([0, 1])$ variables, and the true treatment effect is set to be $\beta_a=\beta$ for all $a\in\mathcal A$. We run 1000 Monte Carlo simulations for various values of $\beta$, $p_\text{anchor}$ and $p_\text{non-anchor}$ and compare our anchor-based estimator (Anchor) with the ERL estimator \citep{harshaw2021design} based on (a) pre-treatment graph ($\text{ERL}_\text{pre}$) and (b) post-treatment graph ($\text{ERL}_\text{post}$). 
The ERL estimators are
\begin{equation*}
\hat\mu^{\text{ERL}_\text{pre}}=\frac1{n_a}\sum_a\frac{y_a(x_a^\text{pre}(\mathbf T)-p)}{p(1-p)/(\sum_{r\in\mathcal R}e_{ar}^\text{pre})}
\end{equation*}
and
\begin{equation*}
\hat\mu^{\text{ERL}_\text{post}}=\frac1{n_a}\sum_a\frac{y_a(x_a^\text{post}(\mathbf T)-p)}{p(1-p)/(\sum_{r\in\mathcal R}e_{ar}^\text{post})},
\end{equation*}
where $x_a^\text{pre}(\mathbf T)=\frac{\sum_{r\in\mathcal R}e_{ar}^\text{pre}T_r}{\sum_{r\in\mathcal R}e_{ar}^\text{pre}}$ and $x_a^\text{post}(\mathbf T)=\frac{\sum_{r\in\mathcal R}e_{ar}^\text{post}T_r}{\sum_{r\in\mathcal R}e_{ar}^\text{post}}$.
Here, $e_{ar}^\text{pre}$ and $e_{ar}^\text{post}$ are the pre-treatment and post-treatment edges between $a$ and $r$, respectively. We set $\beta=2$, $p_\text{anchor}=2\%$ and $p_\text{non-anchor}=0.2\%$ when they are not being varied.

\subsection{Bias Comparison}

Figure~\ref{figure:bias} shows the biases of the estimators under different parameter values. The anchor estimator results are consistent with unbiasedness with respect to different values of treatment effect ($\beta$) and edge formation intensity ($p_\text{non-anchor}$), as expected given our theoretical results. Using the post-treatment graph ($\text{ERL}_\text{post}$) naively can lead to severe overestimation due to the positive correlation between the marginal effect of an edge and the number of treated edges. Estimates based on the pre-treatment graph are still biased, but in the opposite direction, and the magnitude of the bias is much smaller as it stems solely from graph misspecification. Unbiasedness of the anchor estimator is achieved at the cost of higher variance, and the variance grows with $p_\text{non-anchor}$; with more edges being formed in response to treatment, we expect larger metric fluctuations between test and control contributed by $\widehat{\mathcal W_{ar}(\mathbf 1)}^c$, and hence, higher variance of the anchor estimator.

The anchor estimator is quite robust to violations of the anchor subgraph assumption, as we increase $p_\text{anchor}$. In Figure~\ref{figure:bias}, we can see that its bias increases with $p_\text{anchor}$, but stays relatively small compared to the magnitude of the treatment effect, which is $2$. $\text{ERL}_\text{pre}$ is not directly affected by $p_\text{anchor}$ and is stable. On the other hand, the bias of $\text{ERL}_\text{post}$ changes its sign with the number of anchor edges disappearing. This is most likely due to the correlation between the marginal effect (which is positive) and the number of edges (that starts to diminish) becoming negative leading to underestimation of the true effect.


\begin{figure*}[h]
  \centering
  \includegraphics[width=0.61\linewidth]{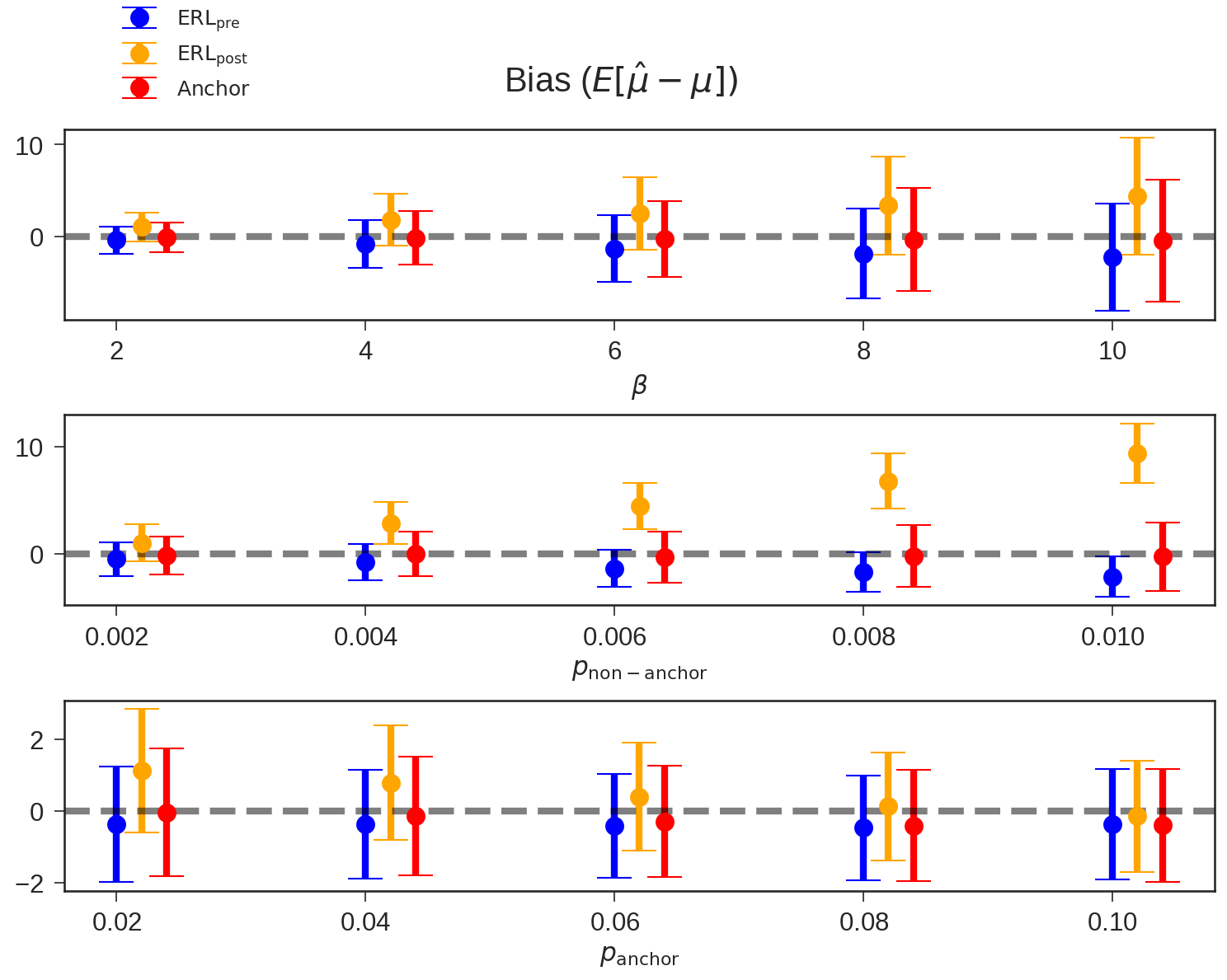}
  \caption{Bias with respect to different parameter values. Error bars are empirical standard deviations of the estimators over 1000 Monte Carlo runs.}
  \label{figure:bias}
\end{figure*}

\begin{figure*}[h]
  \centering
  \includegraphics[width=0.62\linewidth]{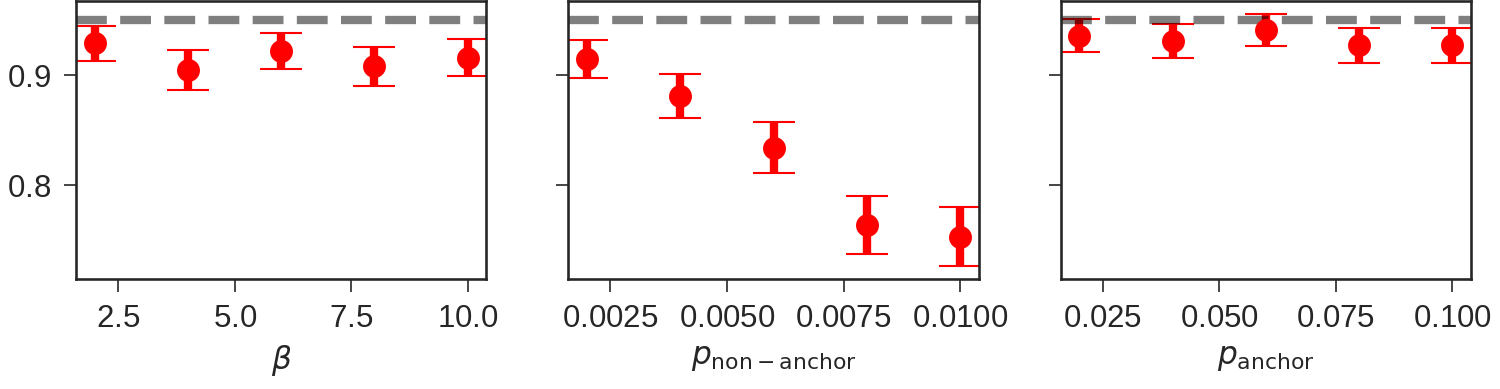}
  \caption{95\%-CI coverage from 1000 independent Monte Carlo runs with respect to different parameter values. Error bars represent the $95\%$-CI for the true coverage probability.}
  \label{figure:coverage}
\end{figure*}


\subsection{Uncertainty Quantification}
In order to provide a confidence interval (CI) for the anchor estimator, we appeal to asymptotic normality and use $\text{Var}(\tilde\mu^{u,c})$ under the sharp null as a proxy for the anchor estimator $\text{Var}(\hat\mu^{u,c})$, where $\tilde\mu^{u,c}$ is the estimator \eqref{equation:mu-tilde-for-test} defined in Section~\ref{subsec:testing}. We observe in Figure \ref{figure:coverage} that this approach delivers close to the nominal level coverage under different values of treatment effect ($\beta$) and anchor subgraph violation ($p_\text{anchor}$), with only modest under-coverage. By contrast, coverage is low when the edge formation density ($p_\text{non-anchor}$) is high, which is not surprising since the variance proxy is for an estimator that ignores the direct effect of newly formed edges.



\section{Endogenous Unipartite Graphs}
\label{sec:unipartite}
An endogenous unipartite interference graph is the special case of Definition~\ref{def:ebg}, where the analysis and randomization units are the same: $\mathcal{A} = \mathcal{R}$. This definition allows for \emph{directed} interference, in the sense that $E_{ar}$ may not equal $E_{ra}$. We specify the outcome model for all $a$ as $y_a = \alpha_a + \beta_a x_a + \gamma_a T_a$, where $x_a = \sum_r T_r e_{ar}w_{ar}$ and we impose $w_{aa} = 0$. Thus, $\beta_a x_a$ is the indirect effect on $a's$ outcome coming from the treatment of others, and $\gamma_a T_a$ is the direct effect coming from $a$ itself being treated. The TTE is $\frac{1}{n_a}\sum_a [\mathcal{W}_a(\mathbf{1})\beta_a + \gamma_a]$. Let $\hat\gamma_a = T_a y_a / p - (1 - T_a) y_a / (1 - p)$. The previous result on unbiasedness in the bipartite setting can be directly applied to construct a class of unbiased estimators in the unipartite setting. We employ the previously developed bipartite estimator to estimate the indirect effect and use $\hat\gamma_a$ as an unbiased estimator of the direct effect. Summing the two gives an unbiased estimator of the TTE, $\hat\mu^{u,c}_{uni}$, defined as
\begin{equation}
\label{equation:uni}
\hat\mu^{u,c}_{uni}=\frac{1}{n_a}\sum_a\hat\mu_{a,uni}^{u,c}=\frac{1}{n_a}\sum_a \left[ \hat\beta^u_a\cdot\widehat{\mathcal W_a(\mathbf 1)}^c + \hat\gamma_a \right].
\end{equation}

\begin{corollary} \label{corollary:unipartite-unbiasedness}
Under the assumptions of Theorem~\ref{theorem:unbiased}, with the outcome model $y_a = \alpha_a + \beta_a x_a + \gamma_a T_a$, if $w_{aa} = u_{aa} = 0$ for all $a$, an unbiased estimator of the TTE is $\hat\mu^{u,c}_{uni}$.
\end{corollary}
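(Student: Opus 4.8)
The plan is to exploit linearity of expectation to split the estimator into its indirect and direct components and handle each with machinery already in place. Writing $\mathbb{E}[\hat\mu^{u,c}_{uni}] = \frac{1}{n_a}\sum_a\big(\mathbb{E}[\hat\beta^u_a\,\widehat{\mathcal W_a(\mathbf 1)}^c] + \mathbb{E}[\hat\gamma_a]\big)$, I would show the two expectations equal $\mathcal W_a(\mathbf 1)\beta_a$ and $\gamma_a$ respectively, so that their average reproduces the unipartite TTE $\frac{1}{n_a}\sum_a[\mathcal W_a(\mathbf 1)\beta_a + \gamma_a]$. The structural fact underpinning everything is that the restrictions $w_{aa}=u_{aa}=0$ remove the self-loop $r=a$ from $x_a=\sum_r T_r e_{ar}w_{ar}$, from $z_a^u=\sum_r T_r u_{ar}$, and (since its $r=a$ summand carries a $w_{aa}$ factor) from $\widehat{\mathcal W_a(\mathbf 1)}^c$; combined with $r$-driven edges (assumption (a)), each of these three quantities is a function only of $\{T_r, e_{ar}: r\ne a\}$ and is therefore independent of $T_a$. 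The assumptions of \Cref{theorem:unbiased} are maintained, so $\mathrm{Cov}(x_a,z_a^u)$ remains known and nonzero and the estimator is well defined.

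For the direct component I would compute $\mathbb{E}[\hat\gamma_a]$ directly. Substituting $y_a=\alpha_a+\beta_a x_a+\gamma_a T_a$ into $\hat\gamma_a = y_a\big(T_a/p-(1-T_a)/(1-p)\big)$ and using $\mathbb{E}[T_a/p-(1-T_a)/(1-p)]=0$ kills the $\alpha_a$ term; the same mean-zero identity together with the independence $x_a\perp T_a$ established above kills the $\beta_a x_a$ term; and the Bernoulli identities $T_a^2=T_a$, $T_a(1-T_a)=0$ give $\mathbb{E}[\gamma_a T_a(T_a/p-(1-T_a)/(1-p))]=\gamma_a$. Hence $\mathbb{E}[\hat\gamma_a]=\gamma_a$, as desired.

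For the indirect component I would reduce to \Cref{theorem:unbiased}. Decomposing $y_a = (\alpha_a+\beta_a x_a) + \gamma_a T_a$, the bipartite part $\tilde y_a=\alpha_a+\beta_a x_a$ is a valid endogenous bipartite outcome (with $G$ read off the off-diagonal, the self-loop having been removed by $w_{aa}=0$), so \Cref{theorem:unbiased} gives $\mathbb{E}[\tilde y_a(z_a^u-\mathbb{E}z_a^u)\widehat{\mathcal W_a(\mathbf 1)}^c/\mathrm{Cov}(x_a,z_a^u)]=\mathcal W_a(\mathbf 1)\beta_a$. It then remains to show the residual $\mathbb{E}[\gamma_a T_a(z_a^u-\mathbb{E}z_a^u)\widehat{\mathcal W_a(\mathbf 1)}^c]=0$. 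By the structural fact the factor $(z_a^u-\mathbb{E}z_a^u)\widehat{\mathcal W_a(\mathbf 1)}^c$ is independent of $T_a$, so this expectation factors as $\gamma_a\,\mathbb{E}[T_a]\cdot\mathbb{E}[(z_a^u-\mathbb{E}z_a^u)\widehat{\mathcal W_a(\mathbf 1)}^c]$; and the second factor vanishes because, under assumptions (c) and (d), $z_a^u$ loads only on treatments of anchor units $V_a\subset G$, whereas the stochastic part of $\widehat{\mathcal W_a(\mathbf 1)}^c$ loads only on treatments of non-anchor units (the anchor summands being constant, since $e_{ar}=1$ and $c_{ar}=1$ there) — disjoint index sets, hence independent under Bernoulli randomization.

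The main obstacle is this last step: confirming that appending the direct-effect term $\gamma_a T_a$ to the outcome does not contaminate the indirect-effect estimator. This is essentially the same decoupling that assumption (d) supplies in \Cref{theorem:unbiased} to force $\mathrm{Cov}(\hat\beta^u_a,\widehat{\mathcal W_a(\mathbf 1)}^c)=0$; the only new ingredient is the extra $T_a$ factor, which is neutralized by the $w_{aa}=u_{aa}=0$ restrictions via the independence of $T_a$ from $z_a^u$ and $\widehat{\mathcal W_a(\mathbf 1)}^c$. A secondary point to handle carefully is the self-loop $(a,a)$: since $E_{aa}=1$ always, it is formally an anchor edge, so assumption (e) of \Cref{theorem:unbiased} must be read as applying to the off-diagonal edges of $G$, with $w_{aa}=u_{aa}=0$ guaranteeing the diagonal never enters any sum.
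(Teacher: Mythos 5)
Your proof is correct and follows essentially the same route as the paper, which justifies \Cref{corollary:unipartite-unbiasedness} only by a sketch: split $\hat\mu^{u,c}_{uni}$ by linearity into the bipartite indirect component, handled by \Cref{theorem:unbiased}, and the Horvitz--Thompson direct component $\hat\gamma_a$. Your write-up is in fact more careful than the paper's on the one genuinely new point --- since $y_a$ now contains $\gamma_a T_a$, \Cref{theorem:unbiased} does not literally apply to $\hat\beta_a^u$, and your verification that the contamination terms vanish (both $\mathbb{E}[\gamma_a T_a(z_a^u-\mathbb{E}z_a^u)\widehat{\mathcal W_a(\mathbf 1)}^c]=0$ and the $\beta_a x_a$ term inside $\hat\gamma_a$), using $w_{aa}=u_{aa}=0$, $r$-driven edges, independent Bernoulli assignment, and the disjointness of the anchor index set $V_a$ from the non-anchor indices driving the stochastic part of $\widehat{\mathcal W_a(\mathbf 1)}^c$, supplies exactly the details the paper leaves implicit.
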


The restriction $w_{aa} = u_{aa} = 0$ is innocuous given the presence of a separate term $\gamma_a T_a$ in the outcome model for the direct effect. The summands in (\ref{equation:uni}) follow the same dependence structure as those in (\ref{equation:mu-hat}). Consistency and asymptotic normality consequently follow directly from the same arguments as the bipartite case, under the same conditions on the growth of the maximum degrees, $d_{\mathcal A}$ and $d_{\mathcal R}$.

\begin{corollary} \label{corollary:unipartite-consistent}
Under the assumptions of Theorem~\ref{theorem:consistent}, with the outcome model $y_a = \alpha_a + \beta_a x_a + \gamma_a T_a$, if $w_{aa} = u_{aa} = 0$ for all $a$, then \begin{enumerate}
    \item $\text{Var}(\hat\mu^{u,c}_{uni})=O(d_{\mathcal A}^3d_{\mathcal R}/n_a)$, and thus $\hat\mu^{u,c}$ is consistent if $d_{\mathcal A}^3d_{\mathcal R}/n_a\to0$;
    \item if $n_a\text{Var}(\hat\mu^{u,c}_{uni})$ is bounded away from zero and $d_{\mathcal A}^{10}d_{\mathcal R}^4/n_a\to0$, then ${(\hat\mu^{u,c}-\mu})/{\sqrt{\text{Var}(\hat\mu^{u,c}_{uni})}}$ converges in distribution to $\mathcal N(0,1)$.
\end{enumerate}
\end{corollary}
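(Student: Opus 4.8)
The plan is to reduce Corollary~\ref{corollary:unipartite-consistent} to Theorem~\ref{theorem:consistent} by showing that the extra direct-effect term $\hat\gamma_a$ changes neither the order of the local dependency neighborhoods nor the order of the per-summand second moments, so that the variance bound and the Stein's-method central limit theorem transfer with only bookkeeping changes. Write each summand as $\hat\mu^{u,c}_{a,uni}=\hat\beta^u_a\cdot\widehat{\mathcal W_a(\mathbf 1)}^c+\hat\gamma_a$, where the first term is precisely the bipartite summand controlled by Theorem~\ref{theorem:consistent} and $\hat\gamma_a=T_ay_a/p-(1-T_a)y_a/(1-p)$ is the inverse-propensity direct-effect estimator. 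Since $\mathcal A=\mathcal R$, the only genuinely new feature is that each summand now also depends on unit $a$'s own treatment $T_a$, both through $\hat\gamma_a$ and through $y_a$.

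First I would pin down the dependency structure. Under $r$-driven edges and the Bernoulli design, $x_a=\sum_{r\in R_a}T_rw_{ar}$, $z_a^u$ depends only on $V_a\subseteq R_a$, and every off-$R_a$ contribution to $\widehat{\mathcal W_a(\mathbf 1)}^c$ vanishes: for $r\notin R_a$ we have $c_{ar}=0$ (anchor edges lie in $R_a$) and $T_re_{ar}=0$ (under $r$-driven edges, $T_r=1$ forces $e_{ar}=E_{ar}(\mathbf 1)=0$), so the corresponding term is identically zero. Hence the indirect part is a function of $\{T_r:r\in R_a\}$ exactly as in the bipartite proof, while $\hat\gamma_a$ is a function of $T_a$ and $y_a$, hence of $\{T_r:r\in R_a\cup\{a\}\}$. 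Thus $\hat\mu^{u,c}_{a,uni}$ is measurable with respect to treatments on $D_a:=R_a\cup\{a\}$, and two summands are dependent only if $D_a\cap D_{a'}\ne\emptyset$. Counting the resulting cases, $R_a\cap R_{a'}\ne\emptyset$ contributes at most $|R_a|\max_r|A_r|\le d_{\mathcal A}d_{\mathcal R}$ neighbors, while $a'\in R_a$, $a\in R_{a'}$, and $a=a'$ contribute at most $d_{\mathcal A}$, $d_{\mathcal R}$, and $1$ respectively, so each dependency neighborhood still has size $O(d_{\mathcal A}d_{\mathcal R})$. This step is the crux: it confirms that the coincidence $\mathcal A=\mathcal R$ and the new self-dependence through $T_a$ enlarge the neighborhoods only by a lower-order amount.

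Next I would record the moment bounds. By assumptions (d) and (b), $|\hat\gamma_a|\le M/\min(p,1-p)=O(1)$, so $\hat\gamma_a$ and its powers are uniformly bounded; together with the bound $\text{Var}(\hat\beta^u_a\cdot\widehat{\mathcal W_a(\mathbf 1)}^c)=O(d_{\mathcal A}^2)$ established in the proof of Theorem~\ref{theorem:consistent} and the Cauchy--Schwarz inequality, this gives $\text{Var}(\hat\mu^{u,c}_{a,uni})=O(d_{\mathcal A}^2)$. Expanding $\text{Var}(\hat\mu^{u,c}_{uni})=n_a^{-2}\sum_{a,a'}\text{Cov}(\hat\mu^{u,c}_{a,uni},\hat\mu^{u,c}_{a',uni})$, only the $O(n_ad_{\mathcal A}d_{\mathcal R})$ dependent pairs contribute, and each covariance is $O(d_{\mathcal A}^2)$ by Cauchy--Schwarz, yielding $\text{Var}(\hat\mu^{u,c}_{uni})=O(d_{\mathcal A}^3d_{\mathcal R}/n_a)$. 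Part (i) then follows by combining this bound with the unbiasedness from Corollary~\ref{corollary:unipartite-unbiasedness} via Chebyshev's inequality.

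Finally, for part (ii) I would reapply the same Stein's-method normal-approximation bound used for Theorem~\ref{theorem:consistent}(ii) to the locally dependent array $\{\hat\mu^{u,c}_{a,uni}\}$. Because the neighborhoods are of order $d_{\mathcal A}d_{\mathcal R}$ and the summands obey the same uniform moment bounds, the Stein error is governed by the identical polynomial in $d_{\mathcal A}$ and $d_{\mathcal R}$; hence $d_{\mathcal A}^{10}d_{\mathcal R}^4/n_a\to0$ together with $n_a\text{Var}(\hat\mu^{u,c}_{uni})$ bounded away from zero gives $(\hat\mu^{u,c}_{uni}-\mu)/\sqrt{\text{Var}(\hat\mu^{u,c}_{uni})}\cid\mathcal N(0,1)$. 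I expect the only real obstacle beyond invoking Theorem~\ref{theorem:consistent} to be the neighborhood count in the second paragraph; once the dependency sets $D_a=R_a\cup\{a\}$ are shown to retain the order $O(d_{\mathcal A}d_{\mathcal R})$, everything else mirrors the bipartite argument.
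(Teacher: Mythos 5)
Your proposal is correct and takes essentially the same route as the paper, which simply asserts that the summands in \eqref{equation:uni} share the dependence structure of those in \eqref{equation:mu-hat} so that the bipartite arguments (local-dependence variance bound plus Stein's method via Lemma~\ref{lemma:wasserstein}) carry over. In fact you supply details the paper leaves implicit---the enlarged dependency sets $D_a=R_a\cup\{a\}$ remaining of size $O(d_{\mathcal A}d_{\mathcal R})$, and the uniform bound $|\hat\gamma_a|\le M/\min(p,1-p)$ keeping the per-summand moments at the same order---so your write-up is, if anything, more complete than the paper's.
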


\section{Conclusion}
\label{sec:conclusion}
We establish a foundation for estimating and testing treatment effects on endogenous graphs, but significant extensions and open questions remain. Future research could characterize the variance-minimizing choice of weights $u_{ar}$ used to construct instruments $z^u_a$. A consistent and scalable variance estimator could be used in conjunction with Theorem~\ref{theorem:consistent} to construct confidence intervals around the TTE estimator, complementing the testing results of Section \ref{sec:tte_estimators}. And while we generalize the linear outcome model from the conventional exogenous edge setting, the outcome model may be implausible in some applications. We believe the research direction of causal inference on endogenous graphs holds promise for advancing our understanding of treatment effects in complex, real-world networks.

\section*{Acknowledgements}
The authors would like to thank Okke Schrijvers for his valuable help in this work.

\bibliography{refs}
\appendix
\section{Variance and Confidence Interval for ERL Estimators}
\label{sec:variance-estimate}
Let
\begin{equation*}
x_a^\text{pre}(\mathbf T)=\frac{\sum_{r\in\mathcal R}e_{ar}^\text{pre}T_r}{\sum_{r\in\mathcal R}e_{ar}^\text{pre}}
\end{equation*}
and
\begin{equation*}
x_a^\text{post}(\mathbf T)=\frac{\sum_{r\in\mathcal R}e_{ar}^\text{pre}T_r}{\sum_{r\in\mathcal R}e_{ar}^\text{post}},
\end{equation*}
where $e_{ar}^\text{pre}$ and $e_{ar}^\text{post}$ are the pre-treatment and post-treatment edge between $a$ and $r$. The ERL estimators \citep{harshaw2021design} are defined as
\begin{equation*}
\hat\mu^\text{pre}=\frac1{n_a}\sum_a\frac{y_a(x_a^\text{pre}(\mathbf T)-p)}{p(1-p)/(\sum_{r\in\mathcal R}e_{ar}^\text{pre})}
\end{equation*}
and
\begin{equation*}
\hat\mu^\text{post}=\frac1{n_a}\sum_a\frac{y_a(x_a^\text{pre}(\mathbf T)-p)}{p(1-p)/(\sum_{r\in\mathcal R}e_{ar}^\text{post})}
\end{equation*}
In stead of using the variance estimator proposed in \citet{harshaw2021design}, we calculate $\text{Var}(\hat\mu^\text{pre}\mid\mathbf y, \mathbf e^\text{pre})$ and $\text{Var}(\hat\mu^\text{post}\mid\mathbf y, \mathbf e^\text{post})$ under the strong null, where $(\mathbf y, \mathbf e^\text{pre}, \mathbf e^\text{post})$ is independent of $\mathbf T$, and use them to construct confidence intervals based on asymptotic normality in our simulations in Section~\ref{sec:simulation}.
\section{Proofs}
\label{sec:proofs}
\begin{proof}[Proof of Theorem 1]
We first show that $\textrm{Cov}(x_a, z^u_a)$ is known and non-zero so that $\hat\beta_a^u$ is a feasible estimator. By Lemma~\ref{lemma:cov}, $\textrm{Cov}(x_a, z^u_a)=p(1-p)\sum_rw_{ar}e_{ar}(1)u_{ar}=p(1-p)\sum_rw_{ar}u_{ar}$. The last simplification is justified because when $u_{ar}\ne0$, $e_{ar}(1)=1$. By assumptions (e) and (f) in the theorem statement, this covariance is non-zero.

Next, we need to show that $\textrm{Cov}\left(\hat\beta^u_a, \widehat{\mathcal W_a(\mathbf 1)}^c\right)=0$ for each $a\in\mathcal A$. This result is proved in Lemma~\ref{lemma:w-a-z}.

Now we know that $\mathbb E\left[\hat\beta^u_a \widehat{\mathcal W_a(\mathbf 1)}^c\right]=\mathbb E\left[\hat\beta^u_a\right]\mathbb E\left[\widehat{\mathcal W_a(\mathbf 1)}^c\right]=\beta_a\mathcal W_a(\mathbf 1)$. Averaging over $a\in\mathcal A$ proves the theorem.
\end{proof}

\begin{lemma}
In the setting of Definition 2, when the treatment assignments are pairwise independent with marginal probability $p$,
\begin{align*}
\text{Cov}\left(\sum_rs_{ar}(T_r)T_r, \sum_rv_{ar}(T_r)T_r\right) \\ 
=p(1-p)&\sum_rs_{ar}(1)v_{ar}(1),
\end{align*}
where $s_{ar}$, $v_{ar}$: $\{0,1\}\to\mathbb R$ are any deterministic functions.
\label{lemma:cov}
\end{lemma}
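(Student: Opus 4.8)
The plan is to linearize both arguments of the covariance by exploiting that each summand carries its own factor of $T_r$. The key observation, which I would establish first, is that $s_{ar}(T_r)T_r = s_{ar}(1)T_r$ for every $T_r \in \{0,1\}$: when $T_r = 0$ both sides vanish, and when $T_r = 1$ both equal $s_{ar}(1)$. The identical reduction applies to $v$, so the function values at the argument $0$ are irrelevant, and both sums collapse to linear forms $\sum_r s_{ar}(1)T_r$ and $\sum_r v_{ar}(1)T_r$ with constant (nonrandom) coefficients.

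Next I would expand the covariance by bilinearity into a double sum,
\[
\text{Cov}\left(\sum_r s_{ar}(1)T_r,\ \sum_{r'} v_{ar'}(1)T_{r'}\right) = \sum_{r}\sum_{r'} s_{ar}(1)\,v_{ar'}(1)\,\text{Cov}(T_r,T_{r'}).
\]
Pairwise independence of the treatment assignments eliminates every off-diagonal term, since $\text{Cov}(T_r,T_{r'}) = 0$ for $r \ne r'$. On the diagonal, $\text{Cov}(T_r,T_r) = \text{Var}(T_r) = p(1-p)$ because each $T_r$ is Bernoulli$(p)$. Collecting the surviving diagonal terms yields $p(1-p)\sum_r s_{ar}(1)v_{ar}(1)$, which is exactly the claimed identity.

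There is no serious obstacle here: the entire argument rests on the opening reduction $s_{ar}(T_r)T_r = s_{ar}(1)T_r$, after which only bilinearity of the covariance and the variance of a Bernoulli variable are needed. The one point I would be careful about is invoking only \emph{pairwise} independence of the $T_r$ — precisely the hypothesis stated in the lemma — rather than full mutual independence, since the cancellation of cross terms requires nothing stronger.
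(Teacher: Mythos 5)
Your proposal is correct and follows essentially the same route as the paper's proof: the reduction $s_{ar}(T_r)T_r = s_{ar}(1)T_r$, followed by bilinearity of covariance, with pairwise independence killing the cross terms and the Bernoulli variance $p(1-p)$ supplying the diagonal. Your explicit remark that only pairwise (not mutual) independence is needed matches the paper's hypothesis and is handled identically there.
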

\begin{proof}[Proof of Lemma~\ref{lemma:cov}]
We notice that $T_rs_{ar}(T_r)\equiv T_rs_{ar}(1)$ and $T_rv_{ar}(T_r)\equiv T_rv_{ar}(1)$, so
\begin{align*}
&\quad\text{Cov}\left(\sum_rs_{ar}(T_r)T_r, \sum_rv_{ar}(T_r)T_r\right)\\
&= \text{Cov}\left(\sum_rs_{ar}(1)T_r, \sum_rv_{ar}(1)T_r\right)\\
  &= \sum_r\textrm{Cov}\left(s_{ar}(1)T_r, v_{ar}(1)T_{r}\right) \\& \quad{} +\sum_r \sum_{r'\ne r} \underbrace{\textrm{Cov}\left(s_{ar}(1)T_r, v_{ar'}(1)(T_{r'})\right)}_{=0, \text{ pairwise independence}}\\
  &= \sum_rs_{ar}(1)v_{ar}(1)\textrm{Cov}\left(T_r, T_{r}\right)\\
  &= p(1-p)\sum_rs_{ar}(1)v_{ar}(1).
\end{align*}
\end{proof}

\begin{lemma}
In the setting of Definition 2, suppose $G$ is an anchor sub-graph. If (a) edges are $r$-driven; (b) the treatment assignments are independent Bernoulli random variables with probability $p$; (c) $\{(a,r)\mid u_{ar}\ne0\}\subset G$; (d) $c_{ar}=\mathbb I((a,r)\in G)$; (e) $w_{ar}\ne0$ for all $(a,r)$; and (f) $|\{r\mid u_{ar}\ne0\}|>0$ for all $a$, then
\begin{equation*}
\textrm{Cov}\left(\widehat{\mathcal{W}_a(\mathbf{1})}^c,  \hat\beta_a^u\right)=0,
\end{equation*}
where $\hat\beta_a^u$ and $\widehat{\mathcal{W}_a(\mathbf{1})}^c$ are from equations (2) and (3).
\label{lemma:w-a-z}
\end{lemma}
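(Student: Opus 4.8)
The plan is to exploit the fact that, once we condition on the experimental design, $\textrm{Cov}(x_a,z_a^u)$ is a nonrandom constant (computed via Lemma~\ref{lemma:cov}), so $\hat\beta_a^u$ is simply a fixed scalar multiple of $y_a(z_a^u-\mathbb E z_a^u)$. It therefore suffices to show $\textrm{Cov}(\widehat{\mathcal W_a(\mathbf 1)}^c,\, y_a(z_a^u-\mathbb E z_a^u))=0$. The first move is to decompose $\widehat{\mathcal W_a(\mathbf 1)}^c$ into a deterministic piece and a random piece according to whether each edge lies in the anchor subgraph. Writing $G_a=\{r:(a,r)\in G\}$, for $r\in G_a$ we have $c_{ar}=1$ and $e_{ar}\equiv1$ by the anchor property, so the summand $\widehat{\mathcal W_{ar}(\mathbf 1)}^c$ collapses to the constant $w_{ar}$; for $r\notin G_a$ we have $c_{ar}=0$ and, using $r$-driven edges together with the identity $T_r e_{ar}(T_r)=T_r e_{ar}(\mathbf 1)$, the summand becomes $\tfrac{1}{p}w_{ar}e_{ar}(\mathbf 1)T_r$. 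Hence $\widehat{\mathcal W_a(\mathbf 1)}^c = C_a + W_a^{\mathrm{rand}}$, where $C_a$ is constant and $W_a^{\mathrm{rand}}$ is a linear combination of $\{T_r:r\notin G_a\}$ only.

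The key structural observation is a disjointness of supports. By assumption (c), $V_a=\{r:u_{ar}\ne0\}\subset G$, so $z_a^u$ is a function of $\{T_r:r\in V_a\}\subset\{T_r:r\in G_a\}$, whereas $W_a^{\mathrm{rand}}$ is a function of $\{T_r:r\notin G_a\}$. Since the treatments are independent $\mathrm{Bernoulli}(p)$, these two blocks of coordinates are independent. The one complication is that $y_a=\alpha_a+\beta_a x_a$ involves $x_a$, which is \emph{not} confined to a single block: rewriting $x_a=\sum_r w_{ar}e_{ar}(\mathbf 1)T_r$ and splitting at $G_a$ gives $x_a=x_a^{\mathrm{in}}+x_a^{\mathrm{out}}$, where $x_a^{\mathrm{in}}=\sum_{r\in G_a}w_{ar}T_r$ lives in the anchor block and $x_a^{\mathrm{out}}=\sum_{r\notin G_a}w_{ar}e_{ar}(\mathbf 1)T_r$ lives in the complementary block. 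This spreading of $x_a$ across both blocks is the main obstacle: $\hat\beta_a^u$ is genuinely correlated with $W_a^{\mathrm{rand}}$ through $x_a^{\mathrm{out}}$, so the result is not a bare independence argument and instead requires the correlation to cancel.

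Finally I would assemble the covariance. Dropping the constant $C_a$ and abbreviating $\tilde z=z_a^u-\mathbb E z_a^u$, I expand $y_a\tilde z=\alpha_a\tilde z+\beta_a(x_a^{\mathrm{in}}+x_a^{\mathrm{out}})\tilde z$ and compute $\mathbb E[W_a^{\mathrm{rand}}y_a\tilde z]$ term by term using block independence: the $\alpha_a$ term vanishes because $W_a^{\mathrm{rand}}$ and $\tilde z$ factor across independent blocks and $\mathbb E\tilde z=0$; the $x_a^{\mathrm{out}}$ term vanishes because $W_a^{\mathrm{rand}}x_a^{\mathrm{out}}$ sits in the complementary block, is independent of $\tilde z$, and $\mathbb E\tilde z=0$; and the $x_a^{\mathrm{in}}$ term factors as $\mathbb E[W_a^{\mathrm{rand}}]\,\mathbb E[x_a^{\mathrm{in}}\tilde z]$. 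Since $\textrm{Cov}(x_a^{\mathrm{out}},z_a^u)=0$ by the same disjointness, we get $\mathbb E[x_a^{\mathrm{in}}\tilde z]=\textrm{Cov}(x_a,z_a^u)$, so $\mathbb E[W_a^{\mathrm{rand}}y_a\tilde z]=\beta_a\,\mathbb E[W_a^{\mathrm{rand}}]\,\textrm{Cov}(x_a,z_a^u)$. This exactly matches $\mathbb E[W_a^{\mathrm{rand}}]\,\mathbb E[y_a\tilde z]$, because $\mathbb E[y_a\tilde z]=\beta_a\textrm{Cov}(x_a,z_a^u)$, and hence the covariance is zero, as claimed.
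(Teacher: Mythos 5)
Your proof is correct, and it reaches the paper's conclusion by a genuinely different route than the paper's own argument. Both proofs rest on the same structural fact: by assumption (c), all randomness in $z_a^u$ lives on the anchor coordinates $\{T_r : (a,r)\in G\}$, while—once the anchor summands of $\widehat{\mathcal W_a(\mathbf 1)}^c$ collapse to the constants $w_{ar}$—all of its randomness lives on the complementary coordinates, and the two blocks are independent under Bernoulli assignment. The paper packages this via the law of total covariance, conditioning on the anchor treatments $T_A$ with $A=\{r: u_{ar}\ne0\}$: the conditional covariance given $T_A$ (computed by a conditional application of Lemma~\ref{lemma:cov}) is a constant multiple of the mean-zero variable $z_a^u-\mathbb E z_a^u$, so its expectation vanishes, and the covariance of conditional expectations vanishes because $\mathbb E[\widehat{\mathcal W_a(\mathbf 1)}^c\mid T_A]$ is constant. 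You avoid conditioning altogether: you decompose $\widehat{\mathcal W_a(\mathbf 1)}^c=C_a+W_a^{\mathrm{rand}}$ and $x_a=x_a^{\mathrm{in}}+x_a^{\mathrm{out}}$, then expand the covariance term by term, killing the cross-block terms (those involving $\alpha_a$ and $x_a^{\mathrm{out}}$) with block independence together with $\mathbb E[z_a^u-\mathbb E z_a^u]=0$, and exhibiting the exact cancellation of the $x_a^{\mathrm{in}}$ term against $\mathbb E[W_a^{\mathrm{rand}}]\,\mathbb E[y_a(z_a^u-\mathbb E z_a^u)]$. What each buys: the paper's conditioning argument is more compact and reuses Lemma~\ref{lemma:cov}; yours is more elementary (no conditional covariance machinery), makes explicit where the linear outcome model and the $r$-driven identity $T_re_{ar}=T_re_{ar}(\mathbf 1)$ enter, and correctly diagnoses why a bare independence argument fails—namely that $x_a$ straddles both blocks, which is precisely the correlation the paper's conditional-covariance term absorbs. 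One cosmetic difference: you split at the full anchor set $G_a$ whereas the paper splits at the instrument support $A\subset G_a$; both work, since for $r\in G_a\setminus A$ the corresponding summand of $\widehat{\mathcal W_a(\mathbf 1)}^c$ is the constant $w_{ar}$ in either decomposition.
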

\begin{proof}[Proof of Lemma~\ref{lemma:w-a-z}]
We omit the $u$ and $c$ superscripts.

We decompose the covariance as
\begin{multline*}
\textrm{Cov}\left(\widehat{\mathcal{W}_a(\mathbf{1})},  y_a\left(\frac{z_a - \mathbb{E}{z_a}}{\textrm{Cov}(x_a, z_a)}\right)\right)\\
=\underbrace{\mathbb E\left[\textrm{Cov}\left(\widehat{\mathcal{W}_a(\mathbf{1})},  y_a\left(\frac{z_a - \mathbb{E}{z_a}}{\textrm{Cov}(x_a, z_a)}\right)\mid T_A\right)\right]}_{X}\\
+\underbrace{\textrm{Cov} \left(\mathbb E\left[\widehat{\mathcal{W}_a(\mathbf{1})}\mid T_A\right],  \mathbb E\left[y_a\left(\frac{z_a - \mathbb{E}{z_a}}{\textrm{Cov}(x_a, z_a)}\right)\mid T_A\right]\right)}_{Y},
\end{multline*}
where $A=\{r\mid u_{ar}\ne0\}\subset\{r\mid (a,r)\in G\}$ is the set of anchor randomization units for analysis unit $a$, and $T_A=\{T_a\mid a\in A\}$. By definition, $z_a$ is a function of $T_A$.

For term $X$, note that
\begin{equation*}
\begin{aligned}
&\quad\textrm{Cov}\left(\widehat{\mathcal{W}_a(\mathbf{1})},  y_a\left(\frac{z_a - \mathbb{E}{z_a}}{\textrm{Cov}(x_a, z_a)}\right)\mid T_A\right)\\
&=\left(\frac{z_a - \mathbb{E}{z_a}}{\textrm{Cov}(x_a, z_a)}\right)\textrm{Cov}\left(\widehat{\mathcal{W}_a(\mathbf{1})},  y_a\mid T_A\right)\\
&=\left(\frac{z_a - \mathbb{E}{z_a}}{\textrm{Cov}(x_a, z_a)}\right)\underbrace{\beta_a(1-p)\sum_{r\not\in A}w_{ar}^2(e_{ar}(1)-c_{ar})e_{ar}(1)}_{\text{constant}}
\end{aligned}
\end{equation*}
is the product of a constant (by Lemma~\ref{lemma:cov}) and a mean-zero random variable, so $X$, its expectation, is zero.

For term $Y$, we have
\begin{equation*}
\begin{aligned}
\mathbb E\left[\widehat{\mathcal{W}_a(\mathbf{1})}\mid T_A\right]&=\underbrace{\sum_{r\in A}T_rw_{ar}(e_{ar}-c_{ar})/p}_{=0} \\& \quad{} +\sum_{r\in A}w_{ar}c_{ar}
+ \sum_{r\not\in A}w_{ar}e_{ar}(1),    
\end{aligned}
\end{equation*}
where the first term is zero because $T_rw_{ar}(e_{ar}-c_{ar})=T_rw_{ar}(e_{ar}(1)-c_{ar})$, and $e_{ar}(1)=c_{ar}=1$ for $r\in A$. This makes $\mathbb E\left[\widehat{\mathcal{W}_a(\mathbf{1})}\mid T_A\right]$ a constant, and thus $Y=0$.
\end{proof}

\begin{lemma}
Under the assumptions of Theorem 2, we have
\begin{equation}
|\hat\mu^{u,c}_a|=|\hat\beta^u_a\widehat{\mathcal W_a(\mathbf 1)}^c|\le\frac{MW_h|R_a|}{p^2(1-p)W_l}.
\label{equation:mu-hat-bound}
\end{equation}
\label{lemma:estimator-bound}
\end{lemma}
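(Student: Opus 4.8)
The plan is to bound the two factors $|\hat\beta_a^u|$ and $|\widehat{\mathcal W_a(\mathbf 1)}^c|$ separately and then multiply, since $|\hat\mu_a^{u,c}|=|\hat\beta_a^u|\cdot|\widehat{\mathcal W_a(\mathbf 1)}^c|$ holds pointwise in $\mathbf T$ and the target is a deterministic bound. The two pieces should contribute the factors $M|R_a|/(p(1-p)W_l)$ and $W_h/p$ respectively, whose product is precisely the claimed bound.

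First I would bound $|\hat\beta_a^u|=|y_a|\,|z_a^u-\mathbb E z_a^u|/|\textrm{Cov}(x_a,z_a^u)|$. The numerator is controlled by $|y_a|\le M$ (assumption (d)) together with the observation that, since $u_{ar}\ge0$ (assumption (b)) and $T_r\in\{0,1\}$, $z_a^u=\sum_{r\in V_a}T_r u_{ar}$ ranges in $[0,\sum_{r\in V_a}u_{ar}]$, so $|z_a^u-\mathbb E z_a^u|\le\sum_{r\in V_a}u_{ar}$. For the denominator, Lemma~\ref{lemma:cov} (as applied in the proof of \Cref{theorem:unbiased}) gives $\textrm{Cov}(x_a,z_a^u)=p(1-p)\sum_r w_{ar}u_{ar}$. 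Because $\{(a,r)\mid u_{ar}\ne0\}\subset G$ forces every such edge to satisfy $E_{ar}(\mathbf 1)=1$, we have $V_a\subseteq R_a$ and hence $|w_{ar}|\ge W_l/|R_a|$ there (assumption (c)); combined with the sign alignment on the instrument's support that makes the instrument positively correlated with $x_a$ (so that $\sum_r w_{ar}u_{ar}=\sum_{r\in V_a}|w_{ar}|u_{ar}$, with no cancellation), this yields $|\textrm{Cov}(x_a,z_a^u)|\ge p(1-p)(W_l/|R_a|)\sum_{r\in V_a}u_{ar}$. The sums $\sum_{r\in V_a}u_{ar}$ cancel, leaving $|\hat\beta_a^u|\le M|R_a|/(p(1-p)W_l)$.

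Next I would bound $|\widehat{\mathcal W_a(\mathbf 1)}^c|=|\sum_r\left(T_r w_{ar}(e_{ar}-c_{ar})/p+w_{ar}c_{ar}\right)|$. The key simplification is that $c_{ar}=\mathbb I((a,r)\in G)$ (assumption (d)) and the anchor property ($e_{ar}=1$ whenever $(a,r)\in G$) together annihilate the augmentation term on anchor edges, so $\widehat{\mathcal W_a(\mathbf 1)}^c=\sum_{(a,r)\in G}w_{ar}+\sum_{(a,r)\notin G}T_r w_{ar}e_{ar}/p$. Under $r$-driven edges, any nonzero summand in the second term has $T_r=1$, which forces $e_{ar}=E_{ar}(\mathbf 1)=1$ and hence $r\in R_a$; the anchor indices also lie in $R_a$, and the two index sets are disjoint. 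Thus the contributing indices form disjoint subsets of $R_a$ whose sizes add to at most $|R_a|$. Bounding $|w_{ar}|\le W_h/|R_a|$ and using $p\le1$ to absorb the unweighted anchor sum into the $1/p$-weighted non-anchor term gives $|\widehat{\mathcal W_a(\mathbf 1)}^c|\le W_h/p$. Multiplying the two bounds produces exactly $|\hat\mu_a^{u,c}|\le MW_h|R_a|/(p^2(1-p)W_l)$.

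The main obstacle is the careful accounting in the second step: one must recognize that the anchor edges and the active non-anchor edges are disjoint subsets of $R_a$, so that their counts sum to at most $|R_a|$ rather than $2|R_a|$, and then exploit $p\le1$ to merge the unweighted anchor contribution into the $1/p$ factor so the bound collapses cleanly to $W_h/p$. A secondary subtlety is avoiding sign cancellation in the covariance denominator of $\hat\beta_a^u$, which is ensured by the positivity of the weights on the instrument's support underlying the positive-correlation conclusion drawn from assumptions (a)--(c).
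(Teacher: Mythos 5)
Your proof is correct and follows essentially the same route as the paper's: the same decomposition of $\widehat{\mathcal W_a(\mathbf 1)}^c$ into the anchor sum $\sum_{r\in V_a}w_{ar}$ plus the $T_r$-weighted sum over $R_a\setminus V_a$ (bounded by $W_h/p$), and the same numerator/denominator bound for $|\hat\beta_a^u|$, where the paper normalizes $\sum_r u_{ar}=1$ while you let the sums cancel—a cosmetic difference. The sign-cancellation subtlety in the covariance denominator that you flag is handled no more explicitly in the paper's proof, so your treatment is if anything slightly more careful.
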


\begin{proof}[Proof of Lemma~\ref{lemma:estimator-bound}]
By the anchor sub-graph assumption, $V_a\subset R_a$. Then we have $x_a$ is a function of $T_{R_a}$ and $z_a^u$ is a function of $T_{V_a}$, and hence $\hat\beta^u_a$ is a function of $T_{R_a\cup V_a}=T_{R_a}$. For $\widehat{\mathcal W_a(\mathbf 1)}^c$, we notice that $c_{ar}=\mathbb I(r\in V_a)$, and then
\begin{equation}
\begin{aligned}
\widehat{\mathcal W_a(\mathbf 1)}^c&=\sum_r\left(\frac{T_r w_{ar}(e_{ar}(1) - c_{ar})}{\mathbb{P}(T_r=1)}+w_{ar}c_{ar}\right)\\
&=\sum_{r\in V_a}\left(\frac{T_r w_{ar}(e_{ar}(1) - 1)}{p}+w_{ar}\right) \\ 
& \quad{} +\sum_{r\not\in V_a}\left(\frac{T_r w_{ar}e_{ar}(1)}{p}\right)\\
&=\sum_{r\in V_a}w_{ar}+\sum_{r\not\in V_a}\left(\frac{T_r w_{ar}e_{ar}(1)}{p}\right)\\
&=\sum_{r\in V_a}w_{ar}+\sum_{r\in R_a\setminus V_a}\left(\frac{T_r w_{ar}e_{ar}(1)}{p}\right)\\& \quad{} +\sum_{r\not\in R_a}\left(\frac{T_r w_{ar}e_{ar}(1)}{p}\right)\\
&=\sum_{r\in V_a}w_{ar}+\sum_{r\in R_a\setminus V_a}\left(\frac{T_r w_{ar}}{p}\right)
\end{aligned}
\label{equation:w-hat-decomp}
\end{equation}
is a function of $T_{R_a\setminus V_a}$. As a result, $\hat\mu_a^{u,c}$ is a function of $T_{R_a}$. 

From \eqref{equation:w-hat-decomp}, we naturally have \begin{equation}
|\widehat{\mathcal W_a(\mathbf 1)}^c|\le \frac{W_h}{|R_a|}(|V_a|+(|R_a|-|V_a|)/p)\le\frac{W_h}{p}.
\label{equation:w-hat-bound}
\end{equation}

Next, we examine $|\hat\beta_a^u|$. Without loss of generality, we assume we have normalized $u$ so $\sum_ru_{ar}=1$. This is possible because we have assumed at least one $u_{ar}$ is non-zero for each $a$.
\begin{equation}
\begin{aligned}
|\hat\beta_a^u|&=|y_a||z_a^u-\mathbb E[z_a^u]|/|\text{Cov}(x_a,z_a^u)|\\
&\le M |\sum_ru_{ar}(T_r-p)|/|p(1-p)\sum_ru_{ar}w_{ar}e_{ar}(1)|\\
&\le M/|p(1-p)\sum_ru_{ar}w_{ar}e_{ar}(1)|\\
&= M/|p(1-p)\sum_{r\in V_a}u_{ar}w_{ar}|\\
&\le M|R_a|/|p(1-p)W_l\sum_{r\in V_a}u_{ar}|\\
&= \frac{M|R_a|}{p(1-p)W_l}.
\end{aligned}
\label{equation:beta-hat-bound}
\end{equation}

Combining \eqref{equation:w-hat-bound} and \eqref{equation:beta-hat-bound}, we arrive at \eqref{equation:mu-hat-bound}.
\end{proof}

\begin{proof}[Proof of Theorem 2] 

We define $\mathcal I(i)=\{j\in\mathcal R\mid R_i\cap R_j\ne\emptyset\}$. It follows directly that $|\mathcal I(i)|\le d_ad_r$. Then
\begin{equation}
\begin{aligned}
\text{Var}(\hat\mu)&=\frac{1}{n_a^2}\sum_{i\in\mathcal A}\sum_{j\in\mathcal A}\text{Cov}(\hat\mu_i,\hat\mu_j)\\
&=\frac{1}{n_a^2}\sum_{i\in\mathcal A}\sum_{j\in\mathcal I(i)}\text{Cov}(\hat\mu_i,\hat\mu_j)\\
&\quad\text{($\hat\mu_i$ is independent of $\hat\mu_j$ if $R_i\cap R_j=\emptyset$)}\\
&\le\frac{1}{n_a^2}\sum_{i\in\mathcal A}\sum_{j\in\mathcal I(i)}\sqrt{\text{Var}(\hat\mu_i)\text{Var}(\hat\mu_j)}\\
&\le\frac{1}{n_a^2}\sum_{i\in\mathcal A}\sum_{j\in\mathcal I(i)}\sqrt{\mathbb E(\hat\mu_i^2)\mathbb E(\hat\mu_j^2)}\\
&\le\frac{1}{n_a^2}\sum_{i\in\mathcal A}\sum_{j\in\mathcal I(i)}\frac{M^2W_h^2|R_i||R_j|}{p^4(1-p)^2W_l^2}\quad\text{(Lemma~\ref{lemma:estimator-bound})}\\
&\le\frac{d_ad_r}{n_a}\frac{M^2W_h^2d_a^2}{p^4(1-p)^2W_l^2}=O(d_a^3d_r/n_a).
\end{aligned}
\end{equation}
In the above, we have suppressed the $u,c$ superscripts for better readability. This proves claim (i).

We let $a_i=\hat\mu_{i}^{u,c}-\mu_i$ in Lemma~\ref{lemma:wasserstein}. We have shown in the proof of Theorem 1 that $\mathbb E[a_i]=0$. By Lemma~\ref{lemma:estimator-bound},
\begin{multline}
|a_i|\le |\hat\mu_{i}^{u,c}|+|\mu_i|
= |\hat\mu_{i}^{u,c}|+|\mathbb E[\hat\mu_{i}^{u,c}]|
\le |\hat\mu_{i}^{u,c}|+\mathbb E[|\hat\mu_{i}^{u,c}|]\\
\le \frac{2MW_h|R_a|}{p^2(1-p)W_l}\le\frac{2MW_hd_a}{p^2(1-p)W_l}.
\label{equation:a-i-bound}
\end{multline}
In the proof of Lemma~\ref{lemma:estimator-bound}, we have shown that both $\hat\beta$ and $\widehat{\mathcal W_a(\mathbf 1)}$ are functions of $T_{R_a}$. Therefore, $\hat\mu_a^{u,c}$ is independent of $\hat\mu_b^{u,c}$ if $R_a\cap R_b=\emptyset$. This means $D\le d_ad_r$. With this fact, and \eqref{equation:a-i-bound}, we have
\begin{align*}
&d_W\left(\frac{\hat\mu^{u,c}-\mu}{\sqrt{\text{Var}(\hat\mu^{u,c})}},Z\right)\\
&\le\frac{d_a^2d_r^2}{\sigma^3n_a^2}\left[\frac{2MW_hd_a}{p^2(1-p)W_l}\right]^3+\sqrt{\frac{28}\pi}\frac{d_a^{3/2}d_r^{3/2}}{\sigma^2n_a^{3/2}}\left[\frac{2MW_hd_a}{p^2(1-p)W_l}\right]^2\\
&=O\left(\frac{d_a^5d_r^2}{\sigma^3n_a^2}+\frac{d_a^{7/2}d_r^{3/2}}{\sigma^2n_a^{3/2}}\right)=O\left(\frac{d_a^5d_r^2}{n_a^{1/2}}+\frac{d_a^{7/2}d_r^{3/2}}{n_a^{1/2}}\right) \\ 
&=O\left(\frac{d_a^5d_r^2}{n_a^{1/2}}\right).
\end{align*}
This proves claim (ii).
\end{proof}

\begin{lemma}[Lemma~3.6 in \cite{ross2011fundamentals}]
Let $a_1,a_2,\dots,a_n$ be random variables with zero mean and finite fourth-moments. Define $X=\left(\frac1n\sum_{i=1}^na_i\right)/\sigma$, where $\sigma^2=\text{Var}\left(\frac1n\sum_{i=1}^na_i\right)$. Then for a standard normal random variable $Z\sim\mathcal N(0,1)$, we have
\begin{equation}
d_W(X,Z)\le\frac{D^2}{\sigma^3n^3}\sum_{i=1}^n\mathbb E[|a_i|^3]+\sqrt{\frac{28}\pi}\frac{D^{3/2}}{n^2\sigma^2}\sqrt{\sum_{i=1}^n\mathbb E[a_i^4]},
\end{equation}
where $D$ is the maximum dependency degree of the random variables and $d_W (\cdot, \cdot)$ is the Wasserstein distance.
\label{lemma:wasserstein}
\end{lemma}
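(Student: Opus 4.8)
This statement is Lemma~3.6 of \cite{ross2011fundamentals}, so the cleanest option is to invoke it directly; for completeness I sketch the self-contained Stein's-method argument that establishes it. The plan is to work with the standardized variable throughout. Writing $\xi_i = a_i/(n\sigma)$, we have $X = \sum_{i=1}^n \xi_i$ with $\mathbb E[\xi_i]=0$ and $\text{Var}(X)=1$. By the Stein characterization of the normal, $d_W(X,Z)=\sup_h|\mathbb E[h(X)]-\mathbb E[h(Z)]|$ over $1$-Lipschitz $h$, and for each such $h$ the solution $f=f_h$ of the Stein equation $f'(w)-wf(w)=h(w)-\mathbb E[h(Z)]$ satisfies the standard regularity bounds $\|f'\|_\infty\le\sqrt{2/\pi}$ and $\|f''\|_\infty\le2$. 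It therefore suffices to bound $|\mathbb E[f'(X)-Xf(X)]|$ uniformly over all such $f$.

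Next I would exploit the local dependence. Let $N_i$ denote the dependency neighborhood of $\xi_i$ (so $\xi_i$ is independent of $\{\xi_j:j\notin N_i\}$ and $|N_i|\le D$), and set $W_i=\sum_{j\in N_i}\xi_j$. Two facts drive the argument: because $\mathbb E[\xi_i\xi_j]=0$ whenever $j\notin N_i$, the variance identity reads $\sum_i\mathbb E[\xi_iW_i]=\text{Var}(X)=1$; and independence of $\xi_i$ from $X-W_i$ gives $\mathbb E[\xi_if(X-W_i)]=0$, so that $\mathbb E[\xi_if(X)]=\mathbb E[\xi_i(f(X)-f(X-W_i))]$. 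A second-order Taylor expansion of $f(X-W_i)$ about $X$ then yields, after summing over $i$,
\[
\mathbb E[f'(X)-Xf(X)]=\mathbb E\!\Big[f'(X)\Big(1-\sum_i\xi_iW_i\Big)\Big]+\tfrac12\sum_i\mathbb E[\xi_iW_i^2\,f''(\tilde w_i)],
\]
where $\tilde w_i$ is the Taylor intermediate point. The two summands are the fourth-moment and third-moment contributions respectively.

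The remainder (third-moment) term is routine: using $\tfrac12\|f''\|_\infty\le1$, the bound $W_i^2\le D\sum_{j\in N_i}\xi_j^2$, Young's inequality $\mathbb E[|\xi_i|\xi_j^2]\le\tfrac13\mathbb E|\xi_i|^3+\tfrac23\mathbb E|\xi_j|^3$, and the fact that each index lies in at most $D$ neighborhoods, one obtains $\sum_i\mathbb E[|\xi_i|W_i^2]\le D^2\sum_i\mathbb E|\xi_i|^3$, which rescales to the first term of the claim. For the leading term, since $1-\sum_i\xi_iW_i$ has mean zero, Cauchy--Schwarz gives $\mathbb E|1-\sum_i\xi_iW_i|\le\sqrt{\text{Var}(\sum_i\xi_iW_i)}$. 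I would control this variance by expanding it as the fourfold sum $\sum_{i,k}\sum_{j\in N_i}\sum_{l\in N_k}\text{Cov}(\xi_i\xi_j,\xi_k\xi_l)$, noting that a summand vanishes unless the four indices form a single dependency-connected cluster, counting that there are only $O(nD^3)$ such nonvanishing terms, and bounding each covariance by fourth moments via $|\xi_i\xi_j\xi_k\xi_l|\le\tfrac14(\xi_i^4+\xi_j^4+\xi_k^4+\xi_l^4)$. This gives $\text{Var}(\sum_i\xi_iW_i)=O(D^3\sum_i\mathbb E[\xi_i^4])$, so with $\|f'\|_\infty\le\sqrt{2/\pi}$ the term is of order $D^{3/2}\sqrt{\sum_i\mathbb E[\xi_i^4]}$; rescaling $\xi_i=a_i/(n\sigma)$ reinstates the factors $n^{-3}$ and $n^{-2}$ and recovers the stated expression.

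I expect the main obstacle to be precisely the combinatorial bookkeeping in the variance of $\sum_i\xi_iW_i$. The naive route of bounding pairwise covariances $\text{Cov}(\xi_iW_i,\xi_kW_k)$ by their standard deviations over the ``neighbors-of-neighbors'' index set double-counts the degree $D$ and produces a loose power such as $D^{5/2}$; obtaining the sharp $D^{3/2}$ (and the constant $\sqrt{28/\pi}$) requires the finer count of dependency-connected \emph{quadruples} described above, which caps the number of nonvanishing terms at $O(nD^3)$. By contrast, the Stein regularity bounds $\|f'\|_\infty\le\sqrt{2/\pi}$, $\|f''\|_\infty\le2$ can simply be quoted, and the third-moment term is comparatively straightforward, so essentially all the delicacy resides in the fourth-moment/variance piece.
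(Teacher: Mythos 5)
Your proposal matches the paper exactly: the paper gives no proof of this lemma, importing it verbatim as Lemma~3.6 of \cite{ross2011fundamentals}, which is precisely your first option of invoking it directly. Your supplementary sketch is moreover a faithful reconstruction of Ross's own argument---Stein-equation regularity bounds $\|f'\|_\infty\le\sqrt{2/\pi}$, $\|f''\|_\infty\le 2$, the local-dependence decomposition with $W_i=\sum_{j\in N_i}\xi_j$ and Taylor expansion, the third-moment count via $|N_i|\le D$, and the Cauchy--Schwarz step with the $O(nD^3)$ dependency-connected quadruple count yielding the sharp $D^{3/2}$ rate and the constant $\sqrt{28/\pi}$---so there is no gap to flag.
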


\begin{proof}[Proof of Theorem 3] 
Due to the discrete nature of $\tilde\mu^{u,c}$, randomization may be needed to achieve exact level-$\alpha$. This concern goes away in the limit.

We suppress the superscript $(u,c)$ in $\tilde\mu$.  The validity of test follows by definition. We begin by showing that $c_\alpha\to0$ and $\tilde\mu-\mu^*\overset{p}\to0$, where $\mu^*=\frac1{n_a}\sum_a\beta_a\sum_r(w_{ar}c_{ar})$.

We first show $c_\alpha\to0$. We define $\tilde\mu_i=\hat\beta_i\sum_r(w_{ir}c_{ir})$, and $\mathcal J(i)=\{j\in\mathcal R\mid V_i\cap V_j\ne\emptyset\}$. It follows directly that $|\mathcal J(i)|\le d_ad_r$. From \eqref{equation:beta-hat-bound}, we have
\begin{equation}
|\tilde\mu_i|\le|\hat\beta_i||\sum_r(w_{ir}c_{ir})|\le\frac{M|R_a|}{p(1-p)W_l}\cdot \frac{W_h}{|R_a|}|V_a|.
\label{equation:tilde-mu-bound}
\end{equation}
Under $H_0^\text{sharp}$, the conditional variance of $\tilde\mu$ given $\{y_a\mid a\in\mathcal A\}$ is
\begin{equation}
\begin{aligned}
\text{Var}(\tilde\mu\mid y)&=\frac{1}{n_a^2}\sum_{i\in\mathcal A}\sum_{j\in\mathcal A}\text{Cov}(\tilde\mu_i,\tilde\mu_j\mid y)\\
&=\frac{1}{n_a^2}\sum_{i\in\mathcal A}\sum_{j\in\mathcal J(i)}\text{Cov}(\tilde\mu_i,\tilde\mu_j\mid y)\\
&\quad\text{(if $V_i\cap V_j=\emptyset$, $z_i^u$ is independent of $z_j^u$}\\
&\quad\text{and thus $\tilde\mu_i$ is independent of $\tilde\mu_j$)}\\
&\le\frac{1}{n_a^2}\sum_{i\in\mathcal A}\sum_{j\in\mathcal J(i)}\sqrt{\text{Var}(\tilde\mu_i\mid y)\text{Var}(\tilde\mu_j\mid y)}\\
&\le\frac{1}{n_a^2}\sum_{i\in\mathcal A}\sum_{j\in\mathcal J(i)}\sqrt{\mathbb E(\tilde\mu_i^2)\mathbb E(\tilde\mu_j^2)}\\
&\le\frac{1}{n_a^2}\sum_{i\in\mathcal A}\sum_{j\in\mathcal J(i)}\frac{M^2W_h^2|V_i||V_j|}{p^2(1-p)^2W_l^2}\quad\text{(because of \eqref{equation:tilde-mu-bound})}\\
&\le\frac{d_ad_r}{n_a}\frac{M^2W_h^2d_a^2}{p^2(1-p)^2W_l^2}=O(d_a^3d_r/n_a)\to0.
\end{aligned}
\label{equation:tilde-mu-cond-var}
\end{equation}
Note that under $H_0^\text{sharp}$,
\begin{equation*}
\begin{aligned}
\text{Var}(\tilde\mu\mid y)&=\mathbb E[\tilde\mu^2\mid y]\\
&= \mathbb E[\tilde\mu^2\mid y, |\tilde\mu|\ge c_\alpha]P(|\tilde\mu|\ge c_\alpha\mid y)\\
&\quad+\mathbb E[\tilde\mu^2\mid y, |\tilde\mu|< c_\alpha]P(|\tilde\mu|< c_\alpha\mid y)\\
&\ge\mathbb E[\tilde\mu^2\mid y, |\tilde\mu|\ge c_\alpha]P(|\tilde\mu|\ge c_\alpha\mid y)\\
&\ge c_\alpha^2\alpha.
\end{aligned}
\end{equation*}
We proved the left hand side goes to zero, so $c_\alpha$ must go to zero.

Next, we show $\tilde\mu-\mu^*\overset{p}{\to}0$. From the fact that $\hat\beta_a$ is unbiased for $\beta_a$, we get
\begin{equation*}
\mathbb E[\tilde\mu]=\frac1{n_a}\sum_a\beta_a\sum_r(w_{ar}c_{ar})\to\mu^*.
\end{equation*}
As for $\text{Var}(\tilde\mu)$,
\begin{equation}
\begin{aligned}
\text{Var}(\tilde\mu)&=\frac{1}{n_a^2}\sum_{i\in\mathcal A}\sum_{j\in\mathcal A}\text{Cov}(\tilde\mu_i,\tilde\mu_j)\\
&=\frac{1}{n_a^2}\sum_{i\in\mathcal A}\sum_{j\in\mathcal I(i)}\text{Cov}(\tilde\mu_i,\tilde\mu_j)\\
&\quad\text{(if $R_i\cap R_j=\emptyset$, $\tilde\mu_i$ is independent of $\tilde\mu_j$)}\\
&\le\frac{1}{n_a^2}\sum_{i\in\mathcal A}\sum_{j\in\mathcal I(i)}\sqrt{\text{Var}(\tilde\mu_i)\text{Var}(\tilde\mu_j)}\\
&\le\frac{1}{n_a^2}\sum_{i\in\mathcal A}\sum_{j\in\mathcal I(i)}\sqrt{\mathbb E(\tilde\mu_i^2)\mathbb E(\tilde\mu_j^2)}\\
&\le\frac{1}{n_a^2}\sum_{i\in\mathcal A}\sum_{j\in\mathcal I(i)}\frac{M^2W_h^2|V_i||V_j|}{p^2(1-p)^2W_l^2}\quad\text{(because of \eqref{equation:tilde-mu-bound})}\\
&\le\frac{d_ad_r}{n_a}\frac{M^2W_h^2d_a^2}{p^2(1-p)^2W_l^2}=O(d_a^3d_r/n_a)\to0,
\end{aligned}
\end{equation}
where $\mathcal I(i)=\{j\in\mathcal R\mid R_i\cap R_j\ne\emptyset\}$ is defined at the beginning of the proof of Theorem 2. Now we have shown $\mathbb E[\tilde\mu-\mu^*]=0$ and $\text{Var}(\tilde\mu-\mu^*)\to0$, which means $\tilde\mu-\mu^*\overset{p}{\to}0$.

Finally, there exists $\varepsilon>0$ so that eventually $|\mu^*|>\varepsilon$ (because the limit inferior of $|\mu^*|$ is positive) and $c_\alpha<\varepsilon/2$ (because $c_\alpha\to0$) as $n_a\to\infty$. Once this happens, 
\begin{equation*}
P(|\tilde\mu|\le c_\alpha)\le P(|\tilde\mu|\le \varepsilon/2)\le P(|\tilde\mu-\mu^*|\ge \varepsilon/2)\to1.
\end{equation*}
\end{proof}

\end{document}